\theoremstyle{plain}
\newenvironment{rem}[1][Remark]{\begin{trivlist}
\item[\hskip \labelsep \textit{Remark} (#1).]}{\hfill$\lrcorner$\end{trivlist}}
\newcounter{ctprop}
\newenvironment{prop}[1][Proposition]{\addtocounter{ctprop}{1}\begin{trivlist}
\item[\hskip \labelsep \textbf{Proposition \arabic{ctprop}}.]}{\end{trivlist}}
\newcounter{cttheorem}
\newenvironment{theorem}[1][Theorem]{\addtocounter{cttheorem}{1}\begin{trivlist}
\item[\hskip \labelsep \textbf{Theorem \arabic{cttheorem}}.]}{\end{trivlist}}
\newcounter{ctProb}
\newenvironment{pblm}[1][Problem]{\addtocounter{ctProb}{1}%
\begin{trivlist}
\item[\hskip \labelsep \textit{Problem} \arabic{ctProb}.]}{\end{trivlist}}
\newcounter{ctlemma}
\newenvironment{lem}[1][Lemma]{\addtocounter{ctlemma}{1}\begin{trivlist}
\item[\hskip \labelsep \textbf{Lemma \arabic{ctlemma}}.]}{\end{trivlist}}
\newcounter{cthyp}
\newenvironment{asm}[1][Assumption]{\addtocounter{cthyp}{1}\begin{trivlist}
\item[\hskip \labelsep \textbf{Assumption \arabic{cthyp}}.]}{\end{trivlist}}
\newcommand\Ca{\mathcal{C}}
\title{\LARGE \bf
Hamiltonian Point of View on Parallel Interconnection of Buck Converters	\\
{\color{blue} Extended version}}
\author{J\'er\'emie Kreiss$^{1}$, Jean-Fran\c cois Tr\'egou\"et$^{1}$, Damien Eberard$^{1}$, \\Romain Delpoux$^{1}$, Jean-Yves Gauthier$^{1}$ and Xuefang Lin-Shi$^{1}$
\thanks{$^{1}$Every author is with Laboratoire Amp\`ere, INSA Lyon,
        Universit\'e de Lyon, 20, Avenue Albert Einstein, 69100 Villeurbanne, France
        {\tt\small firstname.lastname@insa-lyon.fr}.
                Corresponding author mail:
                {\tt\small jeremie.kreiss@insa-lyon.fr }}%
}
\begin{document}

\maketitle
\selectlanguage{English}

{\color{blue}
This report is an extended version of the corresponding paper published in TCST which does not included blue parts of this document.
}
\begin{abstract}
	In this paper, parallel interconnection of DC/DC converters is considered. For this topology of converters feeding a  common load, it has been recently shown that dynamics related to voltage regulation can be completely separated from the current distribution without considering frequency separation arguments, which inevitably limits achievable performance. Within the Hamiltonian framework, this paper shows that this separation between current distribution and voltage regulation is linked to the energy conservative quantities: the Casimir functions. Furthermore, a robust control law is given in this framework to get around the fact that the load might be unknown. In this paper, we also ensure that the system converges to the optimal current repartition, without requiring explicit expression of the optimal locus. Finally, resulting control law efficiency is assessed through experimental results.
\end{abstract}

\begin{keywords}
	Hamiltonian Systems, DC-DC power converters, robust control, robust energy shaping control, power system control.
\end{keywords}

\section{Introduction}

Nowadays, many applications such as low-voltage/high-current power supplies are composed of several power converters connected to a single load. Indeed, this structure benefits from several advantages as a consequence of the free distribution of load current on each converter. Thereby, it is possible to increase reliability, ease of repair, improve thermal management or reduce output ripple by interleaving phase of Pulse Width Modulation (PWM) for example.

The main challenge on this kind of structure is to regulate output voltage and current distribution together which are coupled dynamics. To cope with this difficulty, most of existing solutions (see e.g. \cite{Thottuvelil1997,Sun2005,Huang2007}) propose control design procedure based on frequency consideration that separates dynamics of the system (output voltage from current distribution). Inevitably, those considerations reduce the achievable performance by imposing slow current distribution dynamics.

However, new solutions have been recently presented \cite{Tregouet2017} where the separation between voltage and current distribution dynamics is geometric. Indeed, by both state and input change of coordinates, those two dynamics are disconnected without any frequency considerations. Hence, this approach provides a framework to easily deal with the two dynamics without sacrificing performance for an arbitrary number of DC/DC buck converters with distinct characteristics.

In this paper, main result of \cite{Tregouet2017} is considered in a different framework: the Port-Controlled-Hamiltonian (PCH) formalism (see \cite{Schaft2014} for more details, \cite{Sira-Ramirez1997,Escobar1999} for power converters in this framework). In addition to describing a large class of non-linear models, the PCH structure intrinsically yields many interesting features such as: (i) energy conservative property, (ii) obvious decomposition between interconnection and damping elements, (iii) straightforward relation that link the dynamics to the energy of the system and (iv) attractive nature of interconnection on ports allowing some Plug\&Play behaviours. 
Furthermore the extension to more complicated converters, potentially non-linear, fits into the PCH frame.

Classical control design methods on PCH models, namely interconnection and damping assignment passivity-based control (IDA-PBC) introduced in \cite{Ortega2002} aim to stabilize the dynamics. Yet, the occurrence of disturbance, uncertainties or reference signal leads to steady-state errors and undesirable behaviour. As we consider that the load is unknown, which is the case in most of practical cases,  we will suffer from this. That is why we resort to robust energy shaping control methods which are developed in the Hamiltonian formalism (see \cite{Romero2013}). In practice, those methods reduce to the addition of an integral action on the PCH ports. Unfortunately, in our case, integral actions on the Hamiltonian ports are not sufficient to deal with uncertainties or disturbances. Yet, interesting developments have been provided in \cite{Ortega2012} and \cite{Donaire2009} where integral action on non-passive outputs is considered.

Main contributions of this paper are as follows: 1) Geometric
decomposition proposed in \cite{Tregouet2017} is completely revisited
within the Hamiltonian framework. A new
change of coordinates is proposed, which can be related to the presence of Casimir function. Here, as compared to
\cite{Tregouet2017}, not only current repartition does not impact
voltage regulation but the opposite also holds: Current distribution is
made independent from voltage regulation. As a result, the system can be
separated into two independent subsystems whereas only cascaded form was
achieved in \cite{Tregouet2017}. 2) Load-independent controller is
proposed and proved to comply with all control specifications, so that
unknown load can be taken into account. Inspired by \cite{Ortega2012}
and \cite{Donaire2009}, where constant exogenous disturbance is
considered, our result is derived relying on integral action on
non-passive outputs via a constructive approach which applies to
\emph{state dependent} disturbance. 3) Optimal current repartition is
achieved at the steady state. Specification about this secondary
objective can be conveniently expressed as an optimization problem.
Using strictly convex cost function satisfying some assumptions, we
are able to design a controller which ensures that the equilibrium point
is the minimum of the cost function even if this minimum is unknown. 4)
As a last contribution, experiment results are provided.

The paper is structured as follows. In Section II, Hamiltonian model of the parallel converters is given as well as the control problem. Section III provides a useful change of coordinates related to Casimir functions in order to separate output voltage dynamics from current repartition; Original control problem is then rewritten in the new coordinates. Control design examples for both known and unknown load 
are described in Section~IV. 
{\color{blue} Discussion about robustness are given in Section~V.} Finally, Section VI presents some experimental results.

{\it Notation:} The notation $x_k$ refers to the $k$-th element of vector $x$, with 1 being the index of first element. Given a function $f:\real^n\rightarrow \real$ we define the operators
\[
	\real^{n\times 1}\ni\nabla_x f\coloneqq \left(\frac{\partial f}{\partial x}\right)^\intercal,\quad \real^{p\times1}\ni\nabla_{\xi}f\coloneqq \left(\frac{\partial f}{\partial {\xi}}\right)^\intercal,
\]
where $\xi\in \real^p$ is a sub-vector of the vector $x$. The symbol $\I_m$ stands for the identity matrix of size $m\times m$. The null matrix of size $m\times n$ is denoted by $\0_{m\times n}$. The vector (column matrix) of size $m$ for which every entry is 1 (respectively 0) is denoted by $\1_m$ (respectively $\0_m$). The operator ``diag'' builds diagonal matrix from entries of the input vector argument.
\section{Problem statement}\label{sec:problem}

In this paper, we are interested in the electrical circuit shown in Fig.~\ref{fig:electrical schematic} which corresponds to parallel interconnection of $m$ heterogeneous and synchronous buck converters sharing a single capacitor $C$ and connected to a common resistive load $R$.
\begin{figure}
	\centering
	\includegraphics[width=\columnwidth]{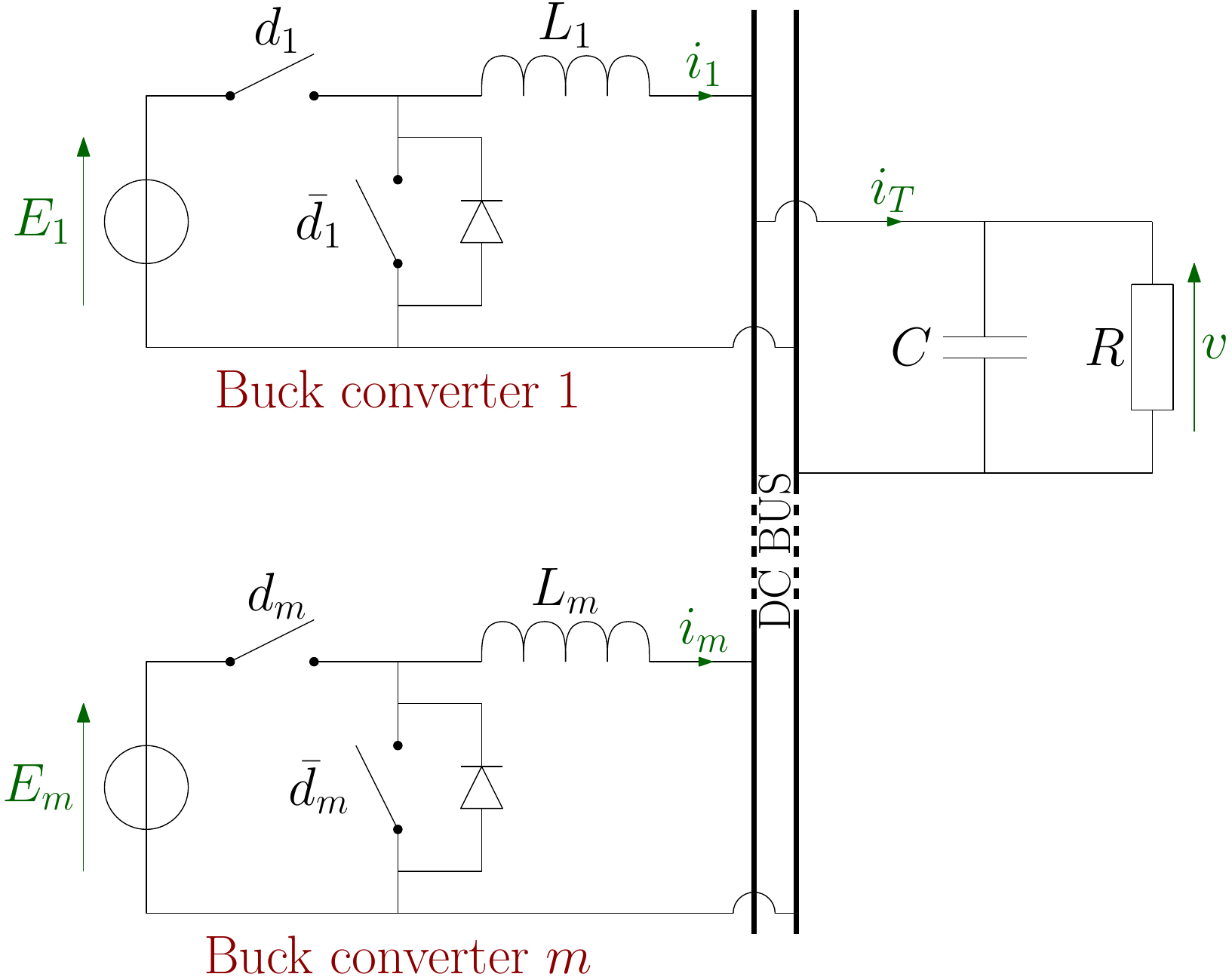}
	\caption{Electrical schematic of $m$ Buck converters}
	\label{fig:electrical schematic}
	\vspace{-0.5cm}
\end{figure}
Instead of acting directly on the switches and dealing with a hybrid system (like e.g.~\cite{Escobar1999}), converters are controlled here via PWM where $d_k$ refers to duty cycle of the $k$-th converter and $\bar d_k = 1-d_k$ is its complementary signal. Index $k$ belongs to the set $\left\{1,\dots,m\right\}$. Capacitor charge is defined as $Q=Cv$ where $v$ is DC bus voltage. Furthermore, we denote by $\varphi_k$ the magnetic flux in $k$-th inductor $L_k$ which is linked to the current $i_k$ by the relation $\varphi_k=L_ki_k$. $E_k$ corresponds to the voltage source of the $k$-th converter.Vector $L\in\real^m$ (resp. $\varphi,i,E\in\real^m$) gathers every element $L_k$ (resp. $\varphi_k,i_k,E_k$) for all $k\in\{1,\dots,m\}$.

 Throughout this paper, we assume that (i) switching frequency $f_s$ is sufficiently large for the dynamics to be approximated by an average continuous time model, and (ii) electrical components and switches are ideal, i.e. parasitic elements can be neglected.

Under those assumptions and using Kirchoff's laws on the energy variables, dynamics of circuit depicted in Fig.~\ref{fig:electrical schematic} are expressed by
\begin{subequations}\label{eq:model}
\begin{eqnarray}
	\forall k\in \left\{1,\dots,m\right\}, ~ \frac{\text{d}\varphi_k}{\text{d}t} =-\frac{Q}{C}+E_k d_k, \label{eq:flux}\\
	\frac{\text{d}Q}{\text{d}t} =\sum_{k=1}^{m}\frac{\varphi_k}{L_k}-\frac{Q}{RC}. \label{eq:charge}
\end{eqnarray}
\end{subequations}
Eq. \eqref{eq:flux} refers to the dynamics of inductors flux produced by each converters whereas \eqref{eq:charge} describes the output capacitor charge dynamics.

This leads to the following linear Hamiltonian system (see \cite{Sira-Ramirez1997})
\begin{equation}\label{eq:Hamiltmodel}
		\dot x=(\mc J -\mc R)\nabla_x H(x)+Bd
\end{equation}
where $\real^{m+1}\ni x=\left[\varphi^\intercal, Q\right]^\intercal$ gathers the energy variables and the smooth function 
\begin{equation}\label{eq:H}
	H(x)=\frac{1}{2}x^\intercal \diag{L,C}^{-1}x
\end{equation}
represents the total stored energy. Energy dissipation is characterized by the $n\times n$ symmetric positive semi-definite matrix
\[
	\mc R=\mc R^\intercal=\diag{\begin{bmatrix} \0_m^\intercal & 1/R	\end{bmatrix}}\geq 0,
\]	
while the skew-symmetric matrix 
\[
	\mc J=-\mc J^\intercal = \begin{bmatrix}\0 & -\1_m\\\1_m^\intercal & 0\end{bmatrix}\in \real^{n\times n},
\]
together with
\begin{equation}
	\label{eq:B}
	B = \begin{bmatrix}\text{diag}\left\{ E\right\} \\
				{\bf 0}_{m}^{\textsf{T}}
			\end{bmatrix}\in \real^{n\times m},
\end{equation}
represent the interconnection structure. See \cite{Schaft2014,Schaft2000,Maschke1992} for more details about PCH modelling.

Bus voltage regulation to a given value $v_r\in \real_{>0}$ (or equivalently $\real_{>0}\ni Q_r\coloneqq Cv_r$) represents the main control objective. We will see later on that the voltage $v$ and hence $Q$ as well only depends on the total current, i.e. the sum of each $i_k$. Thus additional degrees of freedom remain in the way this total current is distributed among the converters. Thereby this paper considers current control distribution or, equivalently, flux control distribution as an additional control objective which is independent from voltage regulation. This secondary control objective is conveniently expressed as a cost function to be minimized, corresponding for example to power losses. This leads to the following expression of the problem addressed in this paper: The constraints correspond to the main control objective whereas the optimization form is related to the secondary objective.

\begin{pblm}
	Given a cost function $J:\real^{m+1} \rightarrow \real$, design state feedback control law $x \mapsto d$ such that resulting closed-loop system admits a (unique) equilibrium point
\[x^\star\coloneqq \begin{bmatrix}\varphi^\star\\ Q^\star\end{bmatrix}\coloneqq\underset{\varphi, Q}{\text{argmin }} J(\varphi, Q)\text{ s.t. }\left\{\begin{array}{l}
	Q=Q_r\\
	\begin{bmatrix}\dot \varphi\\\dot Q\end{bmatrix}=\0
	\end{array}\right.\] which is globally and asymptotically stable (GAS).
\end{pblm}

Literature for Problem 1 is well surveyed in \cite{Luo1999,Huang2007}. Remarkably, almost all existing solutions make use of a two nested loops scheme. The first loop aims associating a close control to each converter, making them acts as a controlled voltage or current source. The second loop is an outer controller whose goal is to achieve exact voltage and current regulation. The fundamental tool for achieving closed-loop stability is frequency separation between those two loops. However, accelerate the outer loop in order to enhance the voltage dynamics might break the frequency separation which, in turn, leads to instability.

In stark contrast with this approach, strategy proposed in the sequel does not required any frequency separation, by resorting to a peculiar change of coordinates.

\section{Change of coordinates}

Purpose of current section is to provide both state and input change of coordinates which aim to separate dynamics into: (i) the minimal part of state vector related to voltage dynamics, i.e. total current and voltage and (ii) the remaining dynamics that is current distribution.

\subsection{Separation of dynamics}

From Fig.~\ref{fig:electrical schematic}, it is clear that the output voltage $v$ only depends on the sum of currents $i_k$ rather than current of each branch individually. 
The following change of coordinates aims to highlight this observation. Accordingly, we introduce the total flux related to the total current as
\begin{equation}
	\varphi_T:=L_{\text{eq},m}\sum_{k=1}^{m}\frac{\varphi_k}{L_k} = \1_m^\intercal \diag{L}^{-1} L_{\mathrm{eq},m} \varphi
\end{equation}
with $L_{\text{eq},k}$ being the equivalent inductor of  the $k$ first coils connected in parallel  (see Fig.~\ref{fig:Leq} for $k=m$):\footnote{$L_{\text{eq},m}$ has been introduced in order to make $\varphi_T$ homogeneous to magnetic flux.}
\begin{equation*}
	\frac{1}{L_{\text{eq},k}}\coloneqq\sum_{j=1}^k\frac{1}{L_j}\neq 0.
\end{equation*}
\begin{figure}
	\centering
	\includegraphics[width=6cm]{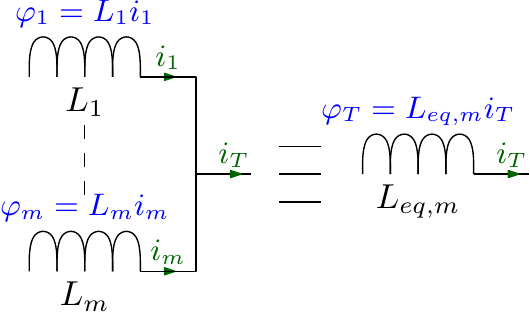}
	\caption{Physical interpretation of $\varphi_T$}
	\label{fig:Leq}
	\vspace{-0.5cm}
\end{figure}

%
We also introduce $\mc C \in \real^{m-1}$ that reflect flux distribution related to the current distribution as
\begin{equation}\label{eq:C}
	\mc C = 	\Gamma_m^\intercal \varphi
\end{equation}
with
\begin{equation}\label{eq:gamma_m}
		\real^{(m-1)\times m}\ni\Gamma_m^\intercal:=G-\begin{bmatrix}\mathbf{0}_{m-1} & \mathbf{I}_{m-1}\end{bmatrix},
\end{equation}
where the $k$-th row of $G\in\real^{(m-1)\times m}$ is
\begin{multline*}
	G_{k}\coloneqq\begin{bmatrix}L_{\text{eq},k}\mathbf{1}_{k}^{\textsf{T}} & \mathbf{0}_{m-k}^{\textsf{T}}\end{bmatrix}\text{diag}\{L\}^{-1}\\
	=L_{\text{eq},k}\begin{bmatrix}\frac{1}{L_1}& \frac{1}{L_2}&\cdots & \frac{1}{L_k}& 0&\cdots &0\end{bmatrix}.
\end{multline*}
Note that the remark named \emph{Expression of $\Gamma_m$} in p.\pageref{rem:ExpressionGamma} gives guidelines for the construction of $\mc C$ for $m=3$.

Next lemma provides dynamical equations in the new coordinates after introducing new input coordinates.
\begin{lem}
	Let $\Phi^{-1}$ and $U^{-1}$ define state and input change of coordinates via
	$z \coloneqq \begin{bmatrix} \mc C^\intercal & \varphi_T & Q \end{bmatrix}^\intercal = \Phi^{-1}x$ and $\begin{bmatrix}\lambda^\intercal & \mu\end{bmatrix}^\intercal=U^{-1}d$ where
	\begin{equation*}
		\Phi^{-1}=\begin{bmatrix}\Gamma_m^\intercal & \0_{m-1}\\\1_m^\intercal\diag L^{-1}L_{\text{eq},m} & 0\\ \0_m^\intercal & 1\end{bmatrix}
	\end{equation*}
	and
	\begin{multline}\label{eq:U}
		U\coloneqq\text{diag}\left\{ E\right\} ^{-1}\begin{bmatrix}\Gamma_{m}^{\textsf{T}}\\
		{\bf 1}_{m}^{\textsc{\textsf{T}}}\text{diag}\left\{ L\right\} ^{-1}L_{\text{eq},m}
		\end{bmatrix}^{-1}\\ \begin{bmatrix}\diag{\tilde E} & \0\\\0 & E_\text{eq}\end{bmatrix},
	\end{multline}
	with $\tilde E\in \real^{m-1}$ and $E_\text{eq}\in \real$ being positive parameters to be chosen.
	Model \eqref{eq:Hamiltmodel} can be equivalently rewritten as
	\begin{multline}\label{eq:modele final}\dot z=\underset{\mc J_z -\mc R_z}{\underbrace{\begin{bmatrix}{\bf 0} & {\bf 0}_{m-1} & {\bf 0}_{m-1}\\
	\mathbf{0}_{m-1}^{\textsf{T}} & 0 & -1\\
	\mathbf{0}_{m-1}^{\textsf{T}} & 1 & -\nicefrac{1}{R}
\end{bmatrix}}}\nabla_z H_z(z)\\+\begin{bmatrix}\text{diag} \left\{\tilde{E}\right\} & \0_{m-1}\\\0_{m-1}^\intercal & E_\text{eq}\\ \mathbf{0}_{m-1}^{\textsf{T}} & 0\end{bmatrix}
		\begin{bmatrix}\lambda\\
			\mu
		\end{bmatrix}.
	\end{multline}
	where $H_z(z)=\frac{1}{2}z^\intercal \mc Q z$ with $\mc Q$ being the following positive definite diagonal matrix
	\begin{equation}\label{eq:calQ}
		\mc Q\coloneqq
		\diag{L_{\mc C},L_{\text{eq},m},C}^{-1}
	\end{equation}
	with $L_\Ca$ a $(m-1)$-dimensional vector defined by
\begin{equation}\label{eq:Lc}
	L_{\Ca,k}\coloneqq L_{\text{eq},k}+L_{k+1},~ \forall k \in \left\{ 1, \cdots,  m-1 \right\}.
\end{equation}
\end{lem}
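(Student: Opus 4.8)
The plan is to recognize the statement as the assertion that a \emph{linear} change of state and input coordinates carries the port-Hamiltonian realization \eqref{eq:Hamiltmodel} into the port-Hamiltonian realization \eqref{eq:modele final}, and then to verify the resulting matrix identities one by one. Writing $x=\Phi z$ and $d=U[\lambda^\intercal\ \mu]^\intercal$, differentiation gives $\dot z=\Phi^{-1}(\mc J-\mc R)\nabla_x H(x)+\Phi^{-1}BU[\lambda^\intercal\ \mu]^\intercal$. Since $H$ in \eqref{eq:H} is quadratic, setting $H_z(z):=H(\Phi z)=\frac{1}{2} z^\intercal\big(\Phi^\intercal\diag{L,C}^{-1}\Phi\big)z$ produces at once the candidate $\mc Q=\Phi^\intercal\diag{L,C}^{-1}\Phi$ and the gradient relation $\nabla_x H(x)\big|_{x=\Phi z}=\diag{L,C}^{-1}\Phi z=\Phi^{-\intercal}\nabla_z H_z(z)$. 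Substituting yields $\dot z=\big(\Phi^{-1}(\mc J-\mc R)\Phi^{-\intercal}\big)\nabla_z H_z+\big(\Phi^{-1}BU\big)[\lambda^\intercal\ \mu]^\intercal$, so $\mc J_z:=\Phi^{-1}\mc J\Phi^{-\intercal}$ is automatically skew-symmetric, $\mc R_z:=\Phi^{-1}\mc R\Phi^{-\intercal}$ automatically symmetric positive semi-definite, and all that remains is to compute $\Phi^{-1}(\mc J-\mc R)\Phi^{-\intercal}$, $\Phi^{-1}BU$ and $\mc Q$ explicitly and match them against \eqref{eq:modele final}--\eqref{eq:calQ}.

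The computations are organised around the observation that $\Phi^{-1}$ is block diagonal, with an $m\times m$ block $M$ obtained by stacking $\Gamma_m^\intercal$ on top of the row $\1_m^\intercal\diag{L}^{-1}L_{\text{eq},m}$ — exactly the matrix inverted inside \eqref{eq:U} — and a trailing $1$ acting on the $Q$-coordinate. Consequently $\Phi^{-1}B$ has top block $M\diag{E}$ and zero last row, so in $\Phi^{-1}BU$ the factors $\diag{E}^{-1}$ and $M^{-1}$ coming from \eqref{eq:U} cancel against $M\diag{E}$ and the claimed input matrix of \eqref{eq:modele final} remains. For $\mc J_z-\mc R_z=\Phi^{-1}(\mc J-\mc R)\Phi^{-\intercal}$, a block multiplication reduces the whole identity to $M\1_m=[\,\0_{m-1}^\intercal\ \ 1\,]^\intercal$: the last entry is $L_{\text{eq},m}\sum_k L_k^{-1}=1$ by definition of $L_{\text{eq},m}$, and the first $m-1$ entries are $\Gamma_m^\intercal\1_m=G\1_m-\1_{m-1}=\0_{m-1}$ because the $k$-th component of $G\1_m$ equals $L_{\text{eq},k}\sum_{j=1}^k L_j^{-1}=1$; inserting $M\1_m$ back reproduces $\mc J_z-\mc R_z$ of \eqref{eq:modele final}. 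Invertibility of $\Phi^{-1}$, hence of $U=\diag{E}^{-1}M^{-1}\diag{\tilde E,E_{\text{eq}}}$, will follow once $M$ is known invertible, which is a by-product of the next step.

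The genuinely computational step — and the one I expect to be the main obstacle — is verifying that $\mc Q=\Phi^\intercal\diag{L,C}^{-1}\Phi$ equals the diagonal matrix $\diag{L_\Ca,L_{\text{eq},m},C}^{-1}$ of \eqref{eq:calQ}. The $C$-block is immediate, and by block-diagonality of $\Phi^{-1}$ the claim is equivalent to $M\diag{L}M^\intercal=\diag{L_\Ca,L_{\text{eq},m}}$ (the equivalence being legitimate since $M$ and $\diag{L}$ are invertible — and $M$ is invertible precisely because this right-hand side, once established, is positive definite). Expanding $M\diag{L}M^\intercal$ into blocks, the last diagonal entry is $L_{\text{eq},m}^2\sum_k L_k^{-1}=L_{\text{eq},m}$, the last-row and last-column off-diagonal blocks equal $L_{\text{eq},m}\Gamma_m^\intercal\1_m=\0_{m-1}$ as above, and the remaining identity is $\Gamma_m^\intercal\diag{L}\Gamma_m=\diag{L_\Ca}$. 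Writing the $k$-th row of $\Gamma_m^\intercal$ as $a_k^\intercal-e_{k+1}^\intercal$, where $a_k:=L_{\text{eq},k}\diag{L}^{-1}[\,\1_k^\intercal\ \ \0_{m-k}^\intercal\,]^\intercal$ and $e_{k+1}$ is the $(k{+}1)$-th canonical vector of $\real^m$, the $(k,\ell)$ entry splits into four scalar products, evaluated from $a_k^\intercal\diag{L}a_\ell=L_{\text{eq},\max(k,\ell)}$, from $a_k^\intercal\diag{L}e_{\ell+1}$ being $L_{\text{eq},k}$ when $\ell<k$ and $0$ otherwise (and its transpose analogously), and from $e_{k+1}^\intercal\diag{L}e_{\ell+1}$ being $L_{k+1}$ when $k=\ell$ and $0$ otherwise. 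These terms cancel whenever $k\neq\ell$ and sum to $L_{\text{eq},k}+L_{k+1}=L_{\Ca,k}$ when $k=\ell$, which is the telescoping identity at the heart of the lemma; positive definiteness of the diagonal $\mc Q$ then follows from positivity of the inductances and of $C$, and everything else is bookkeeping.
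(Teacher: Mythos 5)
Your proof is correct and follows essentially the same route as the paper: push the quadratic Hamiltonian and the structure matrices through the linear state/input change, and reduce everything to the two identities $\Gamma_m^\intercal\1_m=\0_{m-1}$ and $\Gamma_m^\intercal\diag{L}\Gamma_m=\diag{L_\Ca}$, which you verify by the same entrywise telescoping computation the paper carries out for its coefficients $a_{ij}$. The only organizational difference is that the paper first inverts $\Phi^{-1}$ explicitly, introducing the weighted pseudo-inverse $\Gamma_m^+$, and then evaluates $\mc Q=\Phi^\intercal\diag{L,C}^{-1}\Phi$ blockwise, whereas you stay with the block $M$ of $\Phi^{-1}$ and check the equivalent congruence $M\diag{L}M^\intercal=\diag{L_\Ca,L_{\text{eq},m}}$; this is marginally cleaner, since it never needs $\Gamma_m^+$ and yields invertibility of $M$ (hence of $\Phi^{-1}$ and of $U$) as a by-product of positive definiteness rather than as a separate rank argument.
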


\begin{proof}
	Since $\Gamma_m^\intercal$ is full-row rank, $\1_m^\intercal\diag{L}^{-1}\1_m=1/L_{\text{eq},m}$ and $\Gamma_m^\intercal \1_m=\0_{m-1}$, we can show that $\Phi$ reads
\begin{equation*}
	\Phi=\begin{bmatrix}
		 \Gamma_m^+&\1_m &\0_m\\
		\0_{m-1}^\intercal & 0 & 1
	\end{bmatrix},
\end{equation*}
where 
\[\Gamma_m^+\coloneqq\frac{\diag{L}}{L_{\text{eq},m}}\Gamma_m\left(\Gamma_m^\intercal \frac{\diag{L}}{L_{\text{eq},m}}\Gamma_m\right)^{-1}.\]
System in hamiltonian form reads
\begin{multline}\label{eq:modelez}
			\dot z=\Phi^{-1}\dot x=\underset{\eqqcolon \mc J_z-\mc R_z}{\underbrace{\Phi^{-1}\left[\mathcal{J}-\mathcal{R}\right]\Phi^{-\intercal}}}\nabla_z H_z(z) \\+\Phi^{-1}\begin{bmatrix}\text{diag}\left\{ E\right\} \\
				{\bf 0}_{m}^{\textsf{T}}
			\end{bmatrix}d
\end{multline}
where the Hamiltonian is such that
\begin{equation*}
	H_z(z)=H(\Phi(z))=\frac{1}{2}z^\intercal \mc Q z 
\end{equation*}
with $\mc Q=\mc Q^\intercal> 0$ being the following block-diagonal matrix
\begin{equation*}
	\mc Q=\begin{bmatrix}(\Gamma_m^+)^\intercal \diag{L}^{-1}\Gamma_m^+ &\0_{m-1}\\
	\0_{m-1}^\intercal & \diag{L_{\text{eq},m},C}^{-1}\\
	 \end{bmatrix}.
\end{equation*}
Using \eqref{eq:gamma_m}, it holds
$
	\left(\Gamma_m^\intercal \diag L \Gamma_m \right)^{-\intercal}\eqqcolon \left([a_{ij}]\right)^{-1}
$
with
	\begin{eqnarray*}
		a_{ii}&=&(L_{\text{eq},i})^2\underset{L_{\text{eq},i}^{-1}}{\underbrace{\left(\sum_{k=1}^i \frac{1}{L_k}\right)}}+L_{i+1}=L_{\Ca,i}\\
		i<j,~a_{ij}&=&L_{\text{eq},i}L_{\text{eq},j}\underset{L_{\text{eq},i}^{-1}}{\underbrace{\left(\sum_{k=1}^{i}\frac{1}{L_k}\right)}}-L_{\text{eq},j}=0.
	\end{eqnarray*}
	Thus $(\Gamma_m^+)^\intercal \diag{L}^{-1}\Gamma_m^+=\diag{L_\Ca}^{-1}$ so that~\eqref{eq:calQ} holds.
Furthermore, $\mc J_z-\mc R_z$ appearing in~\eqref{eq:modelez} reduces to
\begin{eqnarray*}
	\mc J_z-\mc R_z &=&\Phi^{-1}[\mc J-\mc R]\Phi^{-\intercal}\\
	&=&\begin{bmatrix}{\bf 0} & {\bf 0}_{m-1} & {\bf 0}_{m-1}\\
	\mathbf{0}_{m-1}^{\textsf{T}} & 0 & -1\\
	\mathbf{0}_{m-1}^{\textsf{T}} & 1 & -\nicefrac{1}{R}
\end{bmatrix}.
\end{eqnarray*}
Finally, input to state matrix reads
\[
	\Phi^{-1}\begin{bmatrix}\text{diag}\left\{ E\right\} \\
				{\bf 0}_{m}^{\textsf{T}}
			\end{bmatrix}U=\begin{bmatrix}\text{diag} \left\{\tilde{E}, E_\text{eq}\right\}\\ \mathbf{0}_{m}^{\textsf{T}}\end{bmatrix}.\vspace{-0.4cm}
\]
\end{proof}
\begin{rem}[Input change of coordinates]
	After the state transformation that separate flux distribution from voltage and total current dynamics, the input change of coordinates aims to find the input part that only acts on $\Ca$ and the one that only acts on $\varphi_T$ and $Q$.
\end{rem}

From sparsity of matrices in \eqref{eq:modele final}, we are able to separate \eqref{eq:modele final} into two independent subsystems:
\begin{itemize}
\item[$\bullet$] The first one, $\Sigma_\Ca$ is described by the ($m-1$) first lines of \eqref{eq:modele final}:
\begin{equation}\label{eq:systeme1}
	\hspace{-0.5cm}\begin{cases}
		\dot \Ca&=\text{diag}\left\{\tilde{E}\right\} \lambda\\
		H_\Ca(\Ca)&=\frac{1}{2}\Ca^\intercal \diag{L_{\mc C}}^{-1} \Ca,
	\end{cases}
\end{equation}
and corresponds to the dynamics of overall flux repartition among the different branches;
\item[$\bullet$] The second one, $\Sigma_Q$ is described by the two last lines of \eqref{eq:modele final}: 
\begin{equation}
	\label{eq:equivalentbuck}
	\hspace{-0.8cm}\begin{cases}\begin{bmatrix}\dot{\varphi_T}\\
		\dot{Q}
	\end{bmatrix}=\underset{\mc J_Q-\mc R_Q}{\underbrace{\begin{bmatrix}
		 0 & -1\\
		1 & -\nicefrac{1}{R}
	\end{bmatrix}}}\nabla_{\varphi_T,Q} H_Q+\begin{bmatrix}E_{\text{eq}}\\
		0
	\end{bmatrix}\mu\\
	H_Q=\frac{1}{2}\begin{bmatrix}\varphi_T\\ Q\end{bmatrix}^\intercal\diag{L_{\text{eq},m}, C}^{-1}\begin{bmatrix}\varphi_T\\Q\end{bmatrix},\end{cases}
\end{equation}
and governs dynamics of the capacitor charge, through equivalent flux $\varphi_T$ controlled by input $\mu$.
\end{itemize}
This separation is represented on Fig.~\ref{fig:newopenloop}.
\begin{figure}
	\centering
	\includegraphics[width=\columnwidth-1cm]{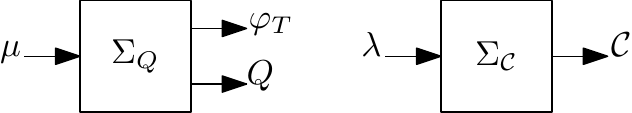}
	\caption{New open-loop model.}
	\label{fig:newopenloop}
	\vspace{-0.5cm}
\end{figure}

\begin{rem}[Fully disconnected subsystems]
Vis-a-vis the main result of \cite{Tregouet2017} where the change of coordinates leads to cascaded subsystems, the Hamiltonian formulation gives rise to a different change of coordinates that fully decouples the two dynamics $\Sigma_\Ca$ and $\Sigma_Q$. Furthermore it induces a diagonal structure of $\Sigma_\Ca$, dynamics of $\mc C$ as well as $H_\Ca$. 
\end{rem}

\begin{rem}[Expression of $\Gamma_m$]\label{rem:ExpressionGamma}
	Note that $\Gamma_m$ could have been defined differently while preserving this aforementioned separation between dynamics. However, we will see in the following that this particular expression leads to a nice structure of the system in the new coordinates and a physical interpretation. Indeed, take the equivalent inductor of the $k$ first branches and the sum of the $k$ first currents, their product corresponds to $G_kx$, the equivalent flux of the $k$ first inductors (see Fig.~\ref{fig:Casimir}).
\begin{figure}
	\centering
	\includegraphics[width=6cm]{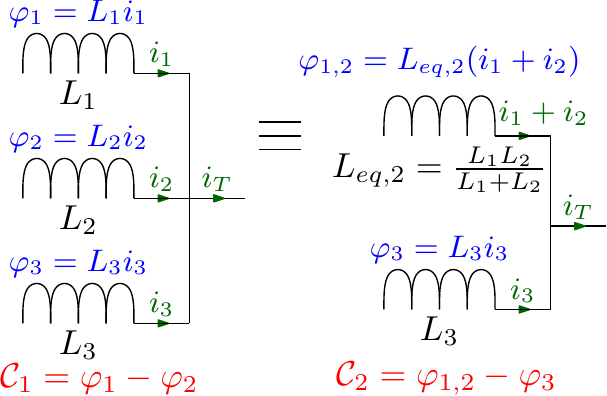}
	\caption{Physical representation of $\Ca$.}
	\label{fig:Casimir}
	\vspace{-0.5cm}
\end{figure}	
As an example, for $k=2$, $G_2x$ reads
\begin{equation*}
	G_2 x=\frac{1}{1/L_1+1/L_2}\left(\frac{\varphi_1}{L_1}+\frac{\varphi_2}{L_2}\right)=L_{\text{eq},2}(i_1+i_2).\vspace{-0.3cm}
\end{equation*}
\end{rem}

\subsection{Circuit theory interpretation}

On the one hand, from its dynamical equation \eqref{eq:equivalentbuck}, subsystem $\Sigma_Q$ can be physically interpreted as average model of buck converter depicted on Fig.~\ref{fig:circuitTheory}~(a) (see \cite{Sira-Ramirez1997}). The equivalent inductor $L_{\text{eq},m}$ of $m$ coils in parallel and the equivalent source $E_\text{eq}$ compose the converter, whereas the input $\mu$ act as a virtual duty cycle. The converter feeds the load $R$ through the capacitor $C$.

On the other hand, $\Sigma_\Ca$ can be seen as ($m-1$) independent electrical circuits $\Sigma_\Ca^k$ represented on Fig.~\ref{fig:circuitTheory}~(b). Each electrical circuit is composed by an electrical source $\tilde E_k$ connected to a virtual coil $L_{\Ca,k}$ via controllable transistors. The circuit is passed through by the following electrical flow $\Ca_k =l_{\Ca,k}\bar i_k$ where $\tilde i_k=\frac{L_{\text{eq},k}}{L_{\Ca,k}} \left(\frac{\varphi_1}{L_1}+\cdots+\frac{\varphi_k}{L_k}-\frac{\varphi_{k+1}}{L_{\text{eq},k}}\right)$.
By the virtual duty cycle $\lambda_k$, we can control the electrical flow $\Ca_k$. If $\Ca_k$ is positive, the $k$ first branches will be favoured to transmit power to the load whereas with $\Ca_k<0$, the branch $k+1$ will take more power than the $k$ first ones. 
Because subsystems $\Sigma_\Ca^k$ are independent from the equivalent buck converter, $\Ca_k$ are only specifying how power flow is allocated among the branches without affecting the power transmitted to the load.
Fig.~\ref{fig:circuitTheory}~(b) depicts that $\Sigma_\Ca^k$ corresponds to the energy transiting between the equivalent inductance $L_{\text{eq},k}$ and $L_{k+1}$ (because $L_{\Ca,k}= L_{\text{eq},k}+L_{k+1}$).
\begin{figure}
	\centering
	   \begin{subfigure}[b]{0.5\textwidth}
        \centering
        \includegraphics[width=7cm]{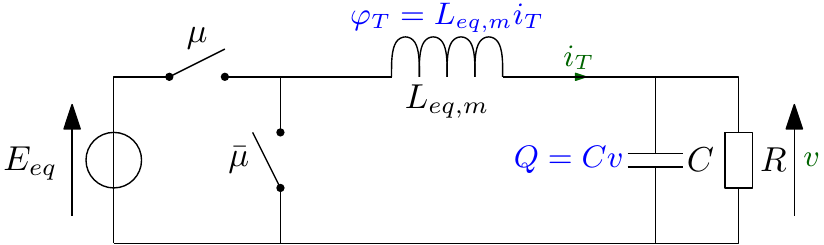}
	\label{fig:circuitTheory}
        \caption{~}
  \end{subfigure}
  \begin{subfigure}[b]{0.5\textwidth}
        \centering
       	\includegraphics[width=5cm]{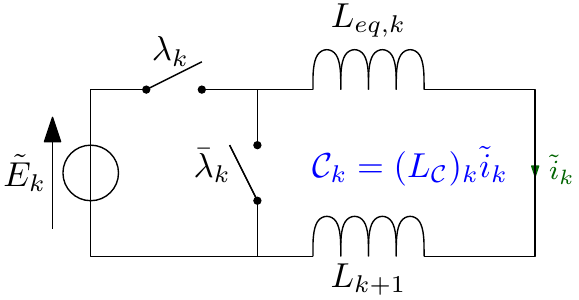}
       	    \caption{~}
  \end{subfigure}
	\caption{Circuit interpretation of (a) $\Sigma_v$ and (b) $\Sigma_\Ca^k$.}
	\label{fig:circuitTheory}
	\vspace{-0.5cm}
\end{figure}
\subsection{Reformulation of Problem 1 in the new coordinates}\label{sec:reformulation}

So far, the contribution is to separate what acts on the charge $Q$, that is $\varphi_T$ and $\mu$ from the {\it free variables} $\Ca$ and inputs $\lambda$.
Let us rewrite the equation of Problem 1 in the variables defined in the previous section:
\begin{equation*}
\begin{array}{ccc}
	\left[\Ca^{*\intercal}, \varphi_T^\star,Q^\star\right]^\intercal\coloneqq& \underset{\Ca,\varphi_T, Q}{\text{argmin}} &J\left(\Phi z\right) \\
	&\text{ s.t. }&\left\{\begin{array}{l}
	Q=Q_r\\
	\dot z=\0_{m+1}.
	\end{array}\right.
\end{array}
\end{equation*}
Knowing the constraint about equilibrium point ($Q=Q_r$), it follows from \eqref{eq:charge} that
\begin{equation*}
	\sum_{k=1}^{m}\frac{\varphi_k^\star}{L_k}=\frac{Q_r}{RC} \Leftrightarrow \varphi_T^\star=\frac{L_{\text{eq},m}Q_r}{RC}.
\end{equation*}
Thus, the constraints impose the asymptotic values of variables related to the equivalent buck (i.e. $\varphi_T$ and $Q$) and they are no longer decision variable of the optimization problem. Hence, optimization subproblem of Problem 1 reduces to:
\begin{equation}\label{eq:problembis}
	\Ca^\star\coloneqq \underset{\Ca}{\text{argmin}} ~J_z\left(\Ca,\varphi_T^\star\right)\;\;\text{s.t.}\;\;
	\dot \Ca=\0_{m-1},
\end{equation}
where $J_z$ and $\varphi_T^\star$ read
\begin{align*}
J_z&:(\Ca,\varphi_T)\mapsto J\left(\Phi \begin{bmatrix}\Ca^\intercal& \varphi_T&Q_r\end{bmatrix}^\intercal\right) \\
\varphi_T^\star&\coloneqq L_{\text{eq},m}Q_r/(RC)
\end{align*}

The separation in two blocs confines dynamics of $\varphi_T$ and $Q$ in a single subsystem. Independently of cost function $J_z$, those two variables must converges to $\varphi_T^\star$ and $Q_r$ respectively to solve Problem 1. In fact, this dynamics refers to voltage regulation objective. Second subsystem $\Sigma_\Ca$ of variables $\Ca$ must converge to an optimal value of cost function $J_z$. This dynamics refers to the optimization of power flow repartition among all the converters, for a chosen total power transmitted to the load.

\begin{rem}[Casimir functions]
	In this section, we gave a change of coordinates that decompose the system in two parts to separate voltage regulation from flow distribution. In fact, this change of coordinates is closely related to the presence of Casimir functions.
	A Casimir function $\mc C:\real^n\rightarrow \real$ is a conservative function regardless of the Hamiltonian (see \cite[p.87]{Schaft2000}). It expresses the dynamical invariants and is a solution of
\begin{equation}\label{eq:Casimir}
	\frac{\text{d}\mc C}{\text{d}t}=0~ \Leftrightarrow ~ \frac{\partial \mc C}{\partial x}\dot x=0.
\end{equation}
As the property of $\mc C$ holds for all $H$, by including the model of an autonomous Hamiltonian system into \eqref{eq:Casimir} we obtain the following relation
\begin{equation}\label{eq:dynamical invariance2}
	\frac{\partial\Ca}{\partial x}(x)\left[\mathcal{J}(x)-\mathcal{R}(x)\right]=0.
\end{equation}
From \cite[p 87]{Schaft2000}, we know that Casimir functions can be used in order to achieve a change of coordinates that isolate those functions from the rest of the state. Since $\Gamma_m^\intercal \1_m=\0$ and
\[
	\frac{\partial \Ca}{\partial x} = \begin{bmatrix}
		\Gamma_m^\intercal	&	\0_{m}	\\
	\end{bmatrix}
\] 
with $\mc C$ given in \eqref{eq:C}, it is clear that \eqref{eq:dynamical invariance2} is satisfied, so that every entry of vector $\mc C$ is a Casimir function. 
\end{rem}

\section{Control design}

Starting from the case where the load is known (Subsection~\ref{sec:Rconnue}), this section then gives a load-independent solution to Problem~1 (see Subsection~\ref{sec:unknown}).

\subsection{Control design with known $R$\label{sec:Rconnue}}

Firstly, we want to solve Problem 1 when $R$ is known.

\subsubsection{Control of equivalent Buck $\Sigma_Q$}

Let us first design a controller for $\Sigma_Q$ that ensure that $Q\rightarrow Q^\star=Q_r$ and $\varphi_T\rightarrow \varphi_T^\star$. Here, the controller is based on the well known Interconnection and Damping Assignment Passivity-Based Control (IDA-PBC) procedure introduced in \cite{Ortega2002}. The desired closed-loop behaviour is written as the following Hamiltonian system
\begin{equation}\label{eq:closed-loop1}
	\begin{bmatrix}\dot \varphi_T\\\dot Q\end{bmatrix}=[\mc J_Q^d-\mc R_Q^d]\nabla_{\varphi_T,Q}H_Q^d(\varphi_T,Q),
\end{equation}
where
\begin{align*}
	\mc J_Q^d &\coloneqq\begin{bmatrix}0 & -1 \\1 & 0\end{bmatrix}, \qquad\mc R_Q^d \coloneqq\begin{bmatrix}k_\mu &0\\0 & 1/R\end{bmatrix},\\
	H_Q^d&\coloneqq\frac{1}{2L_{\text{eq},m}}\left(\varphi_T-\frac{L_{\text{eq},m}Q_r}{RC}\right)^2+\frac{1}{2C}\left(Q-Q_r\right)^2.
\end{align*}
Minimum of $H_Q^d$ is reached for $Q=Q_r$ and $\varphi_T=\varphi_T^\star$. To obtain this closed-loop, we apply the following state feedback controller
\begin{equation}\label{eq:mu1}
	\mu=-\frac{k_\mu}{E_\text{eq}}\frac{1}{L_{\text{eq},m}}\left(\varphi_T-\frac{L_{\text{eq},m}Q_r}{RC}\right)+\frac{Q_r}{C}.
\end{equation}

\begin{prop}
	Equilibrium point $(\varphi_T^\star,Q_r)$ of closed-loop \eqref{eq:closed-loop1} is GAS for any $R>0$ if $k_\mu>0$.
\end{prop}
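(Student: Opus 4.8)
The plan is to exhibit a strict Lyapunov function and invoke a standard Hamiltonian stability argument. The natural candidate is the desired Hamiltonian $H_Q^d$ itself, which by construction is a positive definite quadratic form in the shifted coordinates $\tilde\varphi_T\coloneqq\varphi_T-\varphi_T^\star$ and $\tilde Q\coloneqq Q-Q_r$ (with $\varphi_T^\star=L_{\text{eq},m}Q_r/(RC)$), has a global minimum value $0$ attained exactly at the equilibrium $(\varphi_T^\star,Q_r)$, and is radially unbounded. So the first step is simply to record that $H_Q^d$ qualifies as a Lyapunov candidate.

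Next I would compute $\dot H_Q^d$ along the closed-loop dynamics \eqref{eq:closed-loop1}. Using $\dot z = (\mc J_Q^d-\mc R_Q^d)\nabla H_Q^d$ and skew-symmetry of $\mc J_Q^d$, one gets the textbook identity
\[
	\dot H_Q^d = (\nabla H_Q^d)^\intercal(\mc J_Q^d-\mc R_Q^d)\nabla H_Q^d = -(\nabla H_Q^d)^\intercal \mc R_Q^d \,\nabla H_Q^d \le 0,
\]
where the inequality holds because $\mc R_Q^d=\diag{k_\mu,1/R}$ is positive definite for $k_\mu>0$ and $R>0$. In fact here $\mc R_Q^d>0$ strictly, so $\dot H_Q^d=0$ forces $\nabla H_Q^d=0$, i.e. $\tilde\varphi_T=\tilde Q=0$; there is no need for a LaSalle/detectability argument, the decrease is already strict away from the equilibrium. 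One should also note that \eqref{eq:mu1} is exactly the feedback that turns \eqref{eq:equivalentbuck} into \eqref{eq:closed-loop1} — this is a one-line substitution check matching the $(1,1)$ damping entry $k_\mu$ and cancelling the residual terms, but it is worth stating so that the proposition is about the genuine closed loop rather than the abstract target system. Then GAS follows from the standard Lyapunov theorem for autonomous systems: positive definiteness and radial unboundedness of $H_Q^d$ plus negative definiteness of $\dot H_Q^d$ give global asymptotic stability of $(\varphi_T^\star,Q_r)$, for every $R>0$.

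There is essentially no hard part: the construction in the IDA-PBC step was arranged precisely so that the target Hamiltonian is a ready-made Lyapunov function, and the only mild care needed is (i) checking that the equilibrium of \eqref{eq:closed-loop1} really is $(\varphi_T^\star,Q_r)$ and is unique — which is immediate since $\mc J_Q^d-\mc R_Q^d$ is invertible, so $\nabla H_Q^d=0$ is the only equilibrium condition — and (ii) making sure the argument is uniform in $R$, which it is because $1/R>0$ for all admissible $R$ and $k_\mu>0$ is a design choice independent of $R$. If one instead wanted to use only $\mc R_Q^d\ge0$ (e.g. to mirror the general PCH template), the fallback is a LaSalle invariance argument: on the set $\{\dot H_Q^d=0\}$ one has $\tilde Q\equiv0$ hence $\dot{\tilde Q}\equiv0$, and feeding this back through \eqref{eq:closed-loop1} forces $\tilde\varphi_T\equiv0$, so the largest invariant set is the equilibrium. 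Either way the conclusion is GAS.
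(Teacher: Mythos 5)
Your proposal is correct and follows essentially the same route as the paper's own proof: take $H_Q^d$ as a (radially unbounded, positive definite) Lyapunov function, use skew-symmetry of $\mc J_Q^d$ to get $\dot H_Q^d=-(\nabla H_Q^d)^\intercal \mc R_Q^d\nabla H_Q^d<0$ away from the equilibrium since $k_\mu>0$ and $R>0$ make $\mc R_Q^d$ positive definite, and conclude GAS. The extra checks you add (that \eqref{eq:mu1} indeed yields \eqref{eq:closed-loop1}, uniqueness of the equilibrium, and the LaSalle fallback) are sound but not needed beyond what the paper already does.
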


\begin{proof}
	$H_Q^d$ is a strictly convex function (quadratic) which is minimum at $(\varphi_T^\star,Q_r)$. Furthermore, $k_\mu>0$ and $R>0$ ensure $\mc R_d>0$, such that
	\[\forall\varphi_T,Q,\quad\dot H_d=-(\nabla_{\varphi_T,Q}H_Q^d)^\intercal \mc R_d \nabla_{\varphi_T,Q}H_Q^d<0.\]
Thus $H_d$ is a Lyapunov function of \eqref{eq:closed-loop1} and the closed-loop converges globally and asymptotically to the state value where $H_d$ is minimum, that is $(\varphi_T^\star,Q_r)$
\end{proof}

\subsubsection{Control of $\Sigma_\Ca$}

Let us define the control law
\begin{equation}\label{eq:lambda1}
	\lambda=-\diag{\tilde E}^{-1}K_\lambda\nabla_\Ca J_z\left( \Ca, \varphi_T^\star\right),
\end{equation}
where $K_\lambda=K_\lambda^\intercal\in \real^{(m-1)\times(m-1)}$. In such a case, the closed-loop dynamics reads
\begin{equation}\label{eq:closedloop2}
	\dot \Ca=-K_\lambda\nabla_\Ca  J_z\left(\Ca, \varphi_T^\star\right).
\end{equation}

\begin{asm} \label{asm:Jz}
Map $J_z$ is (i) continuously differentiable and such that (ii) for all $\varphi_T >0$, map $J_z(\cdot,\varphi_T)$ is strictly convex\footnote{Note that for the experimentations described in Section~VI, $J$ reflects the converter losses. Relevant from an engineering view point, this cost function is strictly convex.} and admits a minimum.
\end{asm}

\begin{prop}
	If Assumption~1 hold and $K_\lambda$ is a positive definite matrix, then the closed-loop \eqref{eq:closedloop2} converges globally and asymptotically to the value where the cost function $J_z(\cdot,\varphi_T^\star)$ is minimum.
\end{prop}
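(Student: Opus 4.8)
The plan is to transpose the Lyapunov argument of Proposition~2 to the non-quadratic setting. First I would invoke Assumption~1: the map $\Ca\mapsto J_z(\Ca,\varphi_T^\star)$ is continuously differentiable, strictly convex and attains a minimum; strict convexity makes this minimizer unique, so it is the unique stationary point $\Ca^\star$, i.e.\ the unique solution of $\nabla_\Ca J_z(\Ca,\varphi_T^\star)=\0_{m-1}$, and it coincides with the $\Ca^\star$ appearing in \eqref{eq:problembis}.

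Next I would take as candidate Lyapunov function $W(\Ca):=J_z(\Ca,\varphi_T^\star)-J_z(\Ca^\star,\varphi_T^\star)$, which is continuously differentiable, vanishes at $\Ca^\star$ and, by strict convexity, is positive elsewhere, hence positive definite with respect to $\Ca^\star$. Writing $g(\Ca):=\nabla_\Ca J_z(\Ca,\varphi_T^\star)$, along the trajectories of \eqref{eq:closedloop2} one gets $\dot W = g(\Ca)^\intercal\dot\Ca = -g(\Ca)^\intercal K_\lambda g(\Ca)\le 0$ because $K_\lambda=K_\lambda^\intercal$ is positive definite, with equality exactly when $g(\Ca)=\0_{m-1}$, that is (by the previous step) when $\Ca=\Ca^\star$. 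So $\dot W$ is negative definite and $\Ca^\star$ is already asymptotically stable.

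For the \emph{global} statement I would then argue that $W$ is radially unbounded and close with a standard Lyapunov/LaSalle argument. This is the delicate point: radial unboundedness must be squeezed out of strict convexity together with attainment of the minimum --- along every ray issued from $\Ca^\star$ the directional derivative of $W$ is increasing and eventually strictly positive, and in finite dimension compactness of the unit sphere turns this into coercivity of $W$. An alternative, arguably cleaner, route is to observe that \eqref{eq:closedloop2} is the gradient flow of $J_z(\cdot,\varphi_T^\star)$ in the inner product induced by $K_\lambda^{-1}$ and to quote the classical fact that such a flow converges to the (here singleton, by strict convexity) set of minimizers whenever it is non-empty. Either way, every trajectory of \eqref{eq:closedloop2} converges to $\Ca^\star$, the minimizer of $J_z(\cdot,\varphi_T^\star)$, as claimed. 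I expect the main obstacle to be making this ``global'' part rigorous without assuming more than strict convexity plus existence of a minimizer.
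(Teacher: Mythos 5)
Your proof follows essentially the same route as the paper: take $J_z(\cdot,\varphi_T^\star)$ (equivalently its shift $W$) as Lyapunov function, use strict convexity plus existence of a minimum to get a unique minimizer, and use $K_\lambda\succ0$ to make $\dot W=-(\nabla_\Ca J_z)^\intercal K_\lambda\nabla_\Ca J_z$ negative definite away from that point. The only difference is that you make explicit the radial-unboundedness step needed for the \emph{global} conclusion (which does follow, as you argue, from strict convexity together with attainment of the minimum in finite dimension), whereas the paper's proof leaves this implicit.
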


\begin{proof}
	 By convexity, minimum of $J_z(\cdot,\varphi_T)$ is unique for all $R>0$. Indeed, assumption on $J_z(\cdot,\varphi_T)$ (existence of minimum and strict convexity) applies for all $\varphi_T>0$ and, in turn, for all $\varphi_T^\star (R)$ as soon as $R>0$ (see definition of $\varphi_T^\star (R)$). As
	\[\dot{J}_z (\Ca,\varphi_T^\star)=-(\nabla_\Ca J_z)^\intercal K_\lambda\nabla_\Ca J_z<0\]
if $K_\lambda>0$, $J_z(\Ca,\varphi_T^\star)$ is a Lyapunov function for the closed-loop \eqref{eq:closedloop2}. The state value where $J_z(\Ca,\varphi_T^\star)$ is minimum is then a GAS equilibrium point of this closed-loop. 
\end{proof}

\subsubsection{Solution of Problem 1}

Finally, resulting from control of both $\Sigma_Q$ and $\Sigma_\Ca$, we enunciate the following theorem:

\begin{theorem}
	If $k_\mu>0$, $K_\lambda \succ 0$ and Assumption~1 hold, then load dependent control law defined by \eqref{eq:mu1}, \eqref{eq:lambda1} and $d=U\begin{bmatrix}\lambda^\intercal &\mu \end{bmatrix}^\intercal$ with $U$ giving by \eqref{eq:U} solves Problem~1 for all $R>0$.
\end{theorem}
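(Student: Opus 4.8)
The plan is to recognise that, once expressed in the coordinates of Lemma~1, the proposed control law reduces exactly to the two \emph{independent} closed loops already analysed in Propositions~1 and~2, to identify their joint equilibrium with the constrained minimiser appearing in Problem~1, and then to transport both properties back to the original coordinates via the linear diffeomorphism $\Phi$.

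First I would substitute $d=U\begin{bmatrix}\lambda^\intercal & \mu\end{bmatrix}^\intercal$, with $U$ as in \eqref{eq:U}, $\mu$ as in \eqref{eq:mu1} and $\lambda$ as in \eqref{eq:lambda1}, into the model of Lemma~1. By the very construction of that lemma the dynamics decomposes into $\Sigma_\Ca$ in \eqref{eq:systeme1}, fed by $\lambda$, and $\Sigma_Q$ in \eqref{eq:equivalentbuck}, fed by $\mu$; moreover $\mu$ depends only on $(\varphi_T,Q)$ while $\lambda$ depends only on $\Ca$, since $\varphi_T^\star$ is a fixed constant once $R$ is known. Hence the closed loop is precisely the direct product of \eqref{eq:closed-loop1} and \eqref{eq:closedloop2}: the two blocks are not merely cascaded, they are fully decoupled, so it suffices to establish the statement in the $z$-coordinates and then pull it back.

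Next I would pin down the equilibrium. Following the reformulation of Section~\ref{sec:reformulation}, the constraints of Problem~1 freeze $Q^\star=Q_r$ and, via \eqref{eq:charge}, $\varphi_T^\star=L_{\text{eq},m}Q_r/(RC)>0$; the minimiser $x^\star$ of Problem~1 is therefore the image under $\Phi$ of $z^\star\coloneqq\begin{bmatrix}\Ca^{\star\intercal} & \varphi_T^\star & Q_r\end{bmatrix}^\intercal$, where $\Ca^\star$ solves the reduced problem \eqref{eq:problembis}; this $\Ca^\star$ exists and is unique because Assumption~1 applies at the value $\varphi_T^\star>0$. A one-line check confirms $z^\star$ is an equilibrium of the product closed loop: $\nabla_{\varphi_T,Q}H_Q^d$ vanishes at $(\varphi_T^\star,Q_r)$ and $\nabla_\Ca J_z(\cdot,\varphi_T^\star)$ vanishes at $\Ca^\star$.

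Global asymptotic stability of $z^\star$ then follows from Propositions~1 and~2 together with the elementary fact that the direct product of two GAS systems is GAS; a concrete witness is the Lyapunov function
\[
W(z)\coloneqq H_Q^d(\varphi_T,Q)+J_z(\Ca,\varphi_T^\star)-J_z(\Ca^\star,\varphi_T^\star),
\]
which is continuous, positive definite with respect to $z^\star$ and radially unbounded (as in the proofs of Propositions~1 and~2: $H_Q^d$ is a positive definite quadratic, and $J_z(\cdot,\varphi_T^\star)$ is coercive, being strictly convex with an attained minimum), and whose derivative $\dot W=-(\nabla_{\varphi_T,Q}H_Q^d)^\intercal\mc R_Q^d\nabla_{\varphi_T,Q}H_Q^d-(\nabla_\Ca J_z)^\intercal K_\lambda\nabla_\Ca J_z$ is negative for every $z\neq z^\star$ whenever $k_\mu>0$, $R>0$ and $K_\lambda\succ0$. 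Since $\Phi$ is a global linear diffeomorphism, $x^\star=\Phi z^\star$ is the unique GAS equilibrium of the closed loop in the original variables, and by the previous paragraph it is exactly the point prescribed by Problem~1; uniqueness of the equilibrium comes for free from GAS. The main obstacle is not the stability proof --- which is the conjunction of Propositions~1 and~2 once decoupling is noticed --- but the bookkeeping identifying $z^\star$ with the constrained optimiser, which relies on Section~\ref{sec:reformulation} and on $\varphi_T^\star$ being frozen to a known value, precisely the feature that makes the controller load dependent.
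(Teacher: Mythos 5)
Your proposal is correct and follows essentially the same route as the paper: the paper's proof is precisely the combination of Proposition~1 (voltage loop), Proposition~2 (flux-distribution loop) and the reformulation of Section~\ref{sec:reformulation} identifying the constrained minimiser with $z^\star$, exploiting the full decoupling from Lemma~1. Your explicit composite Lyapunov function and the pullback through $\Phi$ merely flesh out the ``product of two GAS systems is GAS'' step that the paper leaves implicit.
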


\begin{proof}
	From Proposition 1, we know that the constraint $Q=Q_r$ of Problem 1 is fulfilled. As explained in Section~\ref{sec:reformulation}, set of decision variables of optimization problem of Problem 1 reduces to $\Ca$. Since controller $\eqref{eq:lambda1}$ ensure that $J_z(\Ca,\varphi_T^\star)$ is minimum with respect to $\Ca$ (see Proposition 2), then Problem 1 is solved.
\end{proof}

\subsection{Control design with unknown $R$\label{sec:unknown}}

Considering that $R$ is unknown leads to two problems:
\begin{itemize}
\item[$\bullet$] For the control of $\Sigma_Q$, the previous section shows that a classical IDA-PBC controller requires the load value (eq. \eqref{eq:mu1} relies on $R$) because steady-state value depends on $R$;
\item[$\bullet$] The cost function $J$ potentially depends on $\varphi_T^\star$, the equilibrium value of $\varphi_T$ and therefore on $R$. As the controller of $\Sigma_\Ca$ relies on the gradient of $J(\Ca,\varphi_T^\star)$, it also depends on $R$.
\end{itemize}
One of the consequences is that a unilateral interconnection (variable $\varphi_T$) is introduced to estimate the load magnitude for each controllers. Hence, the desired closed loop model shall adopt cascaded form depicted on Fig.~\ref{fig:cascade} where $\Sigma_Q^d$ and $\Sigma_\Ca^d$ refer to desired closed loop of $\Sigma_Q$ and $\Sigma_\Ca$.
\begin{figure}
	\centering
	\includegraphics[width=6cm]{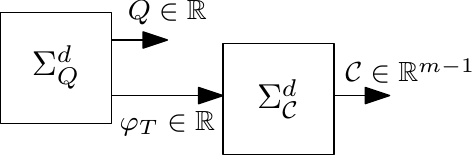}
	\caption{Cascaded interconnection}
	\label{fig:cascade}
	\vspace{-0.5cm}
\end{figure}

One way to deal with this cascaded form is to use main result of \cite{Sontag1989}. It comes out that if the upper subsystem (for us $\Sigma_Q^d$) has a GAS equilibrium and if the lower subsystem (for us $\Sigma_\Ca^d$) has a {\bf0-GAS} equilibrium (i.e. GAS when the input is identically 0), it simply requires that trajectories are bounded for the equilibrium of the entire system to be GAS.

\subsubsection{PID-like control of equivalent Buck $\Sigma_Q$}

In this subsection, we are interested in controlling the equivalent Buck converter \eqref{eq:equivalentbuck} by getting rid of the load value dependance. Robust energy shaping control methods (see \cite{Romero2013}) take into account this kind of issues. They consist in interconnecting the system with a PCH controller. This controller acts as an integrator, hence it is also named PI-like control.

In our case, the regulated output (state $Q$) does not corresponds to the passive output of the PCH system \eqref{eq:equivalentbuck} (state $\varphi_T$). As a result, with a classical interconnection to an integral controller, we are not able to design a $R$-independent control law that stabilize the system at $Q^\star=Q_r$. Hence we resort to the interconnection of PCH systems on non-passive outputs. 
In \cite{Ortega2012}, an approach is presented in order to deal with the case of integral control on non-passive outputs in the Hamiltonian form. 

\begin{rem}[Unknown parameter]
Dealing with unknown parameter $R$ can be cast into a problem of
disturbance rejection by considering that deviation of $R$ with respect
to its nominal value $R_0$ is a perturbation to be rejected by the
closed-loop system. Letting $\Delta G$ be defined by $1/R=1/R_0+\Delta
G$, matrices $\mc J_Q-\mc R_Q$  of system \eqref{eq:equivalentbuck} reads
\begin{equation*}
     \mc J_Q-\mc R_Q=\begin{bmatrix}
          0 & -1\\
         1 & -\left(\nicefrac{1}{R_0}+\Delta G\right)
     \end{bmatrix},
\end{equation*}
As a result, and in stark contrast with methodology proposed in
\cite{Ortega2012} which focuses on \emph{constant exogeneous}
disturbance, we have to tackle disturbance $\Delta G$ which is
multiplied by $\nabla H_Q$ in the expression of derivative of
$(\varphi_T,Q)$ and is, in turn, (linearly) state-dependent. Yet results
of \cite{Ortega2012} paves the way in construction of robust control law
which is proposed in this paper.
\end{rem}

	Let $C_\mu$ be the control law
\begin{equation}\label{eq:controllerSigma}
	\begin{cases}
		\dot \xi &=k_i\frac{Q-Q_r}{C}\\
		\mu&=-\frac{1}{E_\text{eq}}\left(k_d\frac{\varphi_T}{L_{\text{eq},m}}+k_d\xi+L_{\text{eq},m}k_i\frac{Q-Q_r}{C}\right)
	\end{cases}
\end{equation}
with $k_d,k_i\in \real$ be the integral controller on the non-passive output of system \eqref{eq:equivalentbuck}.

\begin{prop}
	The closed-loop of \eqref{eq:equivalentbuck} and \eqref{eq:controllerSigma} converges globally and asymptotically to some equilibrium point for which $Q$ equals $Q_r$ for all $R>0$ if $k_d>0$ and $k_i>0$.
\end{prop}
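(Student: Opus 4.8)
The plan is to close the loop, identify the (unique) equilibrium, and then certify global asymptotic convergence to it with an explicit quadratic Lyapunov function together with LaSalle's invariance principle; the claim $Q\to Q_r$ follows since $Q=Q_r$ holds at that equilibrium for every $R>0$.

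First I would substitute \eqref{eq:controllerSigma} into \eqref{eq:equivalentbuck}, using $\nabla_{\varphi_T}H_Q=\varphi_T/L_{\text{eq},m}$ and $\nabla_Q H_Q=Q/C$, which yields a third-order linear closed loop in $(\varphi_T,Q,\xi)$. It is convenient to replace the integrator state $\xi$ by $\eta:=\xi+\varphi_T/L_{\text{eq},m}$, since then $k_d\varphi_T/L_{\text{eq},m}+k_d\xi=k_d\eta$ in $\mu$ and the closed loop takes the compact form $\dot\varphi_T=-Q/C-k_d\eta-L_{\text{eq},m}k_i(Q-Q_r)/C$, $\dot Q=\varphi_T/L_{\text{eq},m}-Q/(RC)$, $\dot\eta=-Q/(L_{\text{eq},m}C)-k_d\eta/L_{\text{eq},m}$. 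Setting the right-hand sides to zero and using $k_i>0$ and $k_d>0$ shows the equilibrium is unique, with $Q^\star=Q_r$, $\varphi_T^\star=L_{\text{eq},m}Q_r/(RC)$ and $\eta^\star=-Q_r/(k_dC)$; note that $\varphi_T^\star$ and $\xi^\star$ depend on $R$ whereas the control law does not, and that $Q^\star=Q_r$ regardless of $R>0$.

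Next I would pass to error coordinates $(\tilde\varphi_T,\tilde Q,\tilde\eta):=(\varphi_T-\varphi_T^\star,\,Q-Q_r,\,\eta-\eta^\star)$, in which the affine terms cancel and the dynamics become linear and homogeneous, namely $\dot{\tilde\varphi}_T=-(1+L_{\text{eq},m}k_i)\tilde Q/C-k_d\tilde\eta$, $\dot{\tilde Q}=\tilde\varphi_T/L_{\text{eq},m}-\tilde Q/(RC)$, $\dot{\tilde\eta}=-\tilde Q/(L_{\text{eq},m}C)-k_d\tilde\eta/L_{\text{eq},m}$. The candidate Lyapunov function is $V=\frac{1}{2L_{\text{eq},m}}\tilde\varphi_T^2+\frac{k_iL_{\text{eq},m}}{2C}\tilde Q^2+\frac{(1+k_iL_{\text{eq},m})L_{\text{eq},m}}{2}\tilde\eta^2-\tilde\varphi_T\tilde\eta$; its Hessian has leading principal minors $1/L_{\text{eq},m}$, $k_i/C$ and $k_i^2L_{\text{eq},m}^2/C$, hence is positive definite precisely because $k_i>0$, so $V$ is radially unbounded. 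A direct computation — in which the $\tilde\varphi_T\tilde Q$, $\tilde\varphi_T\tilde\eta$ and $\tilde Q\tilde\eta$ cross terms all cancel by the choice of weights — gives $\dot V=-\frac{k_iL_{\text{eq},m}}{RC^2}\tilde Q^2-k_dk_iL_{\text{eq},m}\tilde\eta^2\le 0$ for every $R>0$, with equality iff $\tilde Q=\tilde\eta=0$ (equivalently, the closed loop is again a PCH system with shaped Hamiltonian $V$ and positive semidefinite dissipation). On the set $\{\dot V=0\}$, invariance forces $\dot{\tilde Q}=0$, i.e. $\tilde\varphi_T=0$, so the largest invariant set reduces to the origin; the global version of LaSalle's invariance principle then gives global asymptotic stability of the unique equilibrium, and therefore $Q\to Q_r$, for all $R>0$.

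The main obstacle is the Lyapunov function. The obvious guess — the shifted Hamiltonian of \eqref{eq:equivalentbuck} plus a quadratic penalty on the integrator, with no cross term — does not give $\dot V\le0$ because $\varphi_T$ and the integrator are coupled through $\mu$; one has to both change the integrator coordinate to $\eta$ and add the indefinite cross term $-\tilde\varphi_T\tilde\eta$, and then tune its weight (together with the weights on $\tilde Q^2$ and $\tilde\eta^2$) so that all three cross products vanish in $\dot V$ while $V$ stays positive definite. A secondary point requiring care is that every sign condition, $V\succ0$ and $\dot V\le0$, must hold uniformly in $R>0$ — which is exactly what makes the controller load-independent.
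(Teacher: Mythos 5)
Your proof is correct and takes essentially the same route as the paper's: the paper recasts the closed loop in the coordinates $\chi$ of \eqref{eq:chi} and uses the quadratic Hamiltonian $H_d$ of \eqref{eq:Hd} together with LaSalle, and your $V$ is exactly $k_i L_{\text{eq},m} H_d$ rewritten in the coordinates $(\tilde\varphi_T,\tilde Q,\tilde\eta)$, with the same dissipated variables ($\tilde Q$ and $\tilde\eta=\chi_1/L_{\text{eq},m}$) and the same invariance argument. The only difference is presentational — you derive the Lyapunov function by tuning the cross-term weights instead of importing the change of coordinates from \cite{Ortega2012} — so the two proofs coincide up to a positive scaling and a linear change of the integrator coordinate.
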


\begin{proof}
From \cite{Ortega2012} one can prove that the integral controller \eqref{eq:controllerSigma} and system \eqref{eq:equivalentbuck} can be written in the following coordinates
\begin{equation}\label{eq:chi}
	\chi\coloneqq\begin{bmatrix}
		1 & 0 & L_{\text{eq},m}\\
		0&1&0\\
		0&0&1
	\end{bmatrix} \begin{bmatrix}
		\varphi_T-\frac{L_{\text{eq},m}}{RC}Q_r\\
		Q-Q_r\\
		\xi-\left(-\frac{1}{k_d}-\frac{1}{R}\right)\frac{Q_r}{C}
	\end{bmatrix}
\end{equation}
as the Hamiltonian system:
\begin{equation}\label{eq:modelz}
	\dot \chi=\begin{bmatrix}
	-k_d & -1 & 0\\
	1 & -1/R & -k_i\\
	0 & k_i & 0
	\end{bmatrix}\nabla_\chi H_d (\chi),
\end{equation}
where
\begin{equation}\label{eq:Hd}
	H_d(\chi)=\frac{1}{2}\chi^\intercal \diag{\left[L_{\text{eq},m}, C, k_i\right]}^{-1}\chi.
\end{equation}

	By specifying $k_i>0$, we ensure that \eqref{eq:Hd} is a strictly convex function over $\real^3$. Furthermore, it directly follows from \eqref{eq:Hd} that it admits a (unique) minimum at $\chi^\star =\0_3$. 
	In addition to that, $\dot H_d=\left(\nabla_\chi H_d(\chi)\right)^\intercal \mc R_d \nabla_\chi H_d(\chi)\leq 0$ with $\mc R_d\coloneqq\diag{[k_d, 1/R, 0]}$. Knowing that the largest invariant contained in
	\[
		\mc S\coloneqq\left\{\chi \in \real^3|\dot H_d=0\right\} = \left\{\chi \in \real^3|\chi_1=\chi_2=0\right\}
	\]
	is $\{\0_3\}$ because 
	\begin{eqnarray*}
		\dot \chi=\begin{bmatrix}
	-k_d & -1 & 0\\
	1 & -1/R & -k_i\\
	0 & k_i & 0
	\end{bmatrix}\nabla_\chi H_d (0,0,\chi_3)\subseteq\mc S\\
	\Leftrightarrow (0,-k_i\chi_3,0)\subseteq \mc S \quad \Leftrightarrow\quad\chi_3=0,
	\end{eqnarray*}
	it follows from LaSalle's Invariance Principle (see \cite{Khalil2002}) that $\chi^\star=\0_3$ is GAS. Furthermore, it is clear from \eqref{eq:chi} that when $\chi=\0_3$, it holds $Q=Q_r$.
\end{proof}

\begin{rem}[Controller properties]
	Gain $k_d$ refers to the feedback of the flux variable $\varphi_T$ whereas $k_i$ have an influence on the integral action of the charge $Q$. We also notice that: (i) if $k_d=0$, then there is no integral action of the controller and (ii) $k_i$ also have a proportional feedback action on the charge $Q$.
\end{rem}

\subsubsection{Control of $\Sigma_\Ca$}

In general, the cost function $J_z(\cdot,\varphi_T^\star)$ depends on the load $R$ via $\varphi_T^\star$ which is unknown. This is the reason why, we consider the following control law of $\Sigma_\Ca$, in place of \eqref{eq:lambda1},
\begin{equation}\label{eq:lambda2}
	C_\lambda:\quad \lambda=-\diag{\tilde E}^{-1}K_\lambda\nabla_\Ca J_z\left(\Ca, \varphi_T\right)
\end{equation}
which leads to the closed loop
\begin{equation*}
	\Sigma_\Ca^d: \quad\dot \Ca=-K_\lambda\nabla_\Ca  J_z\left(\Ca, \varphi_T \right).
\end{equation*}
As already discussed at the beginning of Subsection~\ref{sec:unknown}, $\varphi_T$ is now an input of $\Sigma_\Ca^d$ and we will see here stability property of this subsystem when this input is at rest, that is when $\varphi_T=\varphi_T^\star$. In such a case, we recover subsystem already considered in Proposition~2, so that the following result can be established in the same way, since $\varphi_T^\star >0$ is arbitrary in Proposition~2.

\begin{prop}
Assume $K_\lambda \succ 0$ and Assumption~1 holds. Then for any strictly positive and constant $\varphi_T$, set point $\textstyle\argmin_\Ca J_z(\Ca,\varphi_T)$ is GAS.
\end{prop}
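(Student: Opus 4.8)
The plan is to follow the lines of the proof of Proposition~2, since the only property of $\varphi_T^\star$ used there was its positivity. Fix once and for all an arbitrary constant $\varphi_T>0$. By Assumption~1, the map $\Ca\mapsto J_z(\Ca,\varphi_T)$ is continuously differentiable, strictly convex and admits a minimum; strict convexity makes the minimizer unique, so let $\Ca^\star$ denote it. It is characterized by the stationarity condition $\nabla_\Ca J_z(\Ca^\star,\varphi_T)=\0_{m-1}$, and by strict convexity $\nabla_\Ca J_z(\Ca,\varphi_T)\neq\0_{m-1}$ whenever $\Ca\neq\Ca^\star$.

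First I would introduce the shifted candidate Lyapunov function
\begin{equation*}
	V(\Ca):=J_z(\Ca,\varphi_T)-J_z(\Ca^\star,\varphi_T),
\end{equation*}
which, by strict convexity and stationarity at $\Ca^\star$, satisfies $V(\Ca^\star)=0$ and $V(\Ca)>0$ for all $\Ca\neq\Ca^\star$. Next I would record that $V$ is radially unbounded: a strictly convex $C^1$ function on $\real^{m-1}$ that attains its minimum has compact sublevel sets (otherwise $J_z(\cdot,\varphi_T)$ would be bounded above and nondecreasing along some ray emanating from $\Ca^\star$, hence constant on that ray, contradicting strict convexity). Then, differentiating $V$ along the closed loop $\dot \Ca=-K_\lambda\nabla_\Ca J_z(\Ca,\varphi_T)$ gives
\begin{equation*}
	\dot V=\big(\nabla_\Ca J_z(\Ca,\varphi_T)\big)^\intercal\dot\Ca=-\big(\nabla_\Ca J_z(\Ca,\varphi_T)\big)^\intercal K_\lambda\,\nabla_\Ca J_z(\Ca,\varphi_T).
\end{equation*}
Since $K_\lambda\succ0$, this is nonpositive and vanishes only when $\nabla_\Ca J_z(\Ca,\varphi_T)=\0_{m-1}$, i.e. only at $\Ca=\Ca^\star$. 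Hence $\dot V<0$ for $\Ca\neq\Ca^\star$, and together with positive definiteness and radial unboundedness of $V$ this yields global asymptotic stability of $\Ca^\star$. (Equivalently, LaSalle's invariance principle on the compact sublevel sets of $V$ works, the largest invariant set contained in $\{\dot V=0\}=\{\Ca^\star\}$ being trivially $\{\Ca^\star\}$.)

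The only point requiring care is the global, as opposed to merely local, nature of the conclusion: this rests on the properness (radial unboundedness) of $V$. Assumption~1 only postulates strict convexity and existence of a minimum rather than coercivity outright, so I would spell out the short convex-analysis argument above — strict monotonicity of the gradient forces the sublevel sets of $J_z(\cdot,\varphi_T)$ to be bounded — after which the remainder is routine. The case $m=1$ is vacuous, $\Ca$ being then empty.
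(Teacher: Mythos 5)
Your proof is correct and follows essentially the same route as the paper, which simply invokes the Lyapunov argument of Proposition~2 (using $J_z(\cdot,\varphi_T)$ itself as the Lyapunov function) and notes that only positivity of $\varphi_T^\star$ was used there. Your explicit treatment of radial unboundedness is a welcome addition the paper glosses over in Proposition~2 (it only asserts this property later, in the proof of Theorem~2), but it does not change the substance of the argument.
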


\subsubsection{Solution of Problem 1}

Making use of individual controllers of both $\Sigma_Q$ and $\Sigma_\Ca$, we now establish main result of this paper.

\begin{theorem}
Let $\Ca^\star(\varphi_T)\coloneqq\textstyle\argmin_\Ca J_z(\Ca,\varphi_T)$ and define maps $h:\real^{m-1}\times\real\rightarrow\real^{m-1}$ and $W:\real^{m-1}\rightarrow\real$ as follows:
\begin{align*}
h(\tilde\Ca,\tilde\varphi_T)&  \coloneqq  \nabla_\Ca  J_z (\tilde\Ca+\Ca^\star, \tilde\varphi_T+\varphi_T^\star)\\& \quad\qquad- \nabla_\Ca  J_z (\tilde \Ca+\Ca^\star, \varphi_T^\star) \\
W(\tilde\Ca)\coloneqq & J_z(\tilde\Ca+\Ca^\star,\varphi_T^\star)-J_z(\Ca^\star,\varphi_T^\star).
\end{align*}
Assume that the two following facts hold:
\begin{itemize}
\item[F1)] There exists two strictly increasing functions $G_{1,2}$ which are null and differentiable at the origin and such that
$
\|h(\tilde\Ca,\tilde\varphi_T)\| \leq G_1 ( |\tilde\varphi_T |) + G_2 (|\tilde\varphi_T|) \|\tilde\Ca\|
$

\item[F2)] There exist positive constants $c$ and $k$ such that $\|\tilde\Ca\| >c$ implies
$
\left\| \nabla W(\tilde\Ca)\right\| \|\tilde\Ca\| \leq k  W(\tilde\Ca)
$
\end{itemize}
Then, load independent control law defined by \eqref{eq:controllerSigma}, \eqref{eq:lambda2} and $d=U\begin{bmatrix}\lambda^\intercal &\mu \end{bmatrix}^\intercal$ with $U$ given by \eqref{eq:U} solves Problem 1 if $K_\lambda=K_\lambda^\intercal \succ 0$, $k_d>0$, $k_i>0$ and Assumption~1 holds.
\end{theorem}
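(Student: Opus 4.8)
The plan is to exploit the cascade structure of the full closed loop and to invoke the result of \cite{Sontag1989} quoted above. The closed-loop state is $(\Ca,\varphi_T,Q,\xi)$; since $\mu$ in \eqref{eq:controllerSigma} does not depend on $\Ca$ and, by construction of the coordinates, acts only on $(\varphi_T,Q)$, the triple $(\varphi_T,Q,\xi)$ evolves autonomously --- this is the driving (upper) subsystem, whose dynamics is \eqref{eq:modelz} up to the affine change of variables \eqref{eq:chi} --- whereas $\dot\Ca=-K_\lambda\nabla_\Ca J_z(\Ca,\varphi_T)$ is the driven (lower) subsystem $\Sigma_\Ca^d$, forced by $\varphi_T$. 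By (the proof of) Proposition~3 the driving subsystem is GAS to an equilibrium at which $Q=Q_r$ and $\varphi_T=\varphi_T^\star$; moreover \eqref{eq:modelz} is linear, so the convergence of $\chi$, and hence of $\tilde\varphi_T\coloneqq\varphi_T-\varphi_T^\star$ (an affine function of $\chi$ vanishing at $\chi=\0$), is exponential, say $|\tilde\varphi_T(t)|\le M e^{-at}$. By Proposition~4, the driven subsystem with its input frozen at the rest value $\varphi_T=\varphi_T^\star$ is GAS to $\Ca^\star\coloneqq\Ca^\star(\varphi_T^\star)$, i.e.\ it is $0$-GAS. Hence, by \cite{Sontag1989}, it only remains to prove that every trajectory of the interconnection is bounded; since $(\varphi_T,Q,\xi)$ converges, it suffices to bound $\Ca(t)$.

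For that boundedness estimate, set $\tilde\Ca\coloneqq\Ca-\Ca^\star$. Using the definition of $h$ in the statement, the driven subsystem reads
\begin{equation*}
\dot{\tilde\Ca}=-K_\lambda\nabla_\Ca J_z(\tilde\Ca+\Ca^\star,\varphi_T^\star)-K_\lambda\,h(\tilde\Ca,\tilde\varphi_T),
\end{equation*}
and the map $W$ of the statement is a Lyapunov function for the unforced part: it is nonnegative, vanishes only at $\tilde\Ca=\0$, and $\nabla W(\tilde\Ca)=\nabla_\Ca J_z(\tilde\Ca+\Ca^\star,\varphi_T^\star)$. Differentiating $W$ along the perturbed dynamics, using $K_\lambda=K_\lambda^\intercal\succ0$ and Fact~F1, I would obtain
\begin{equation*}
\dot W\le-\lambda_{\min}(K_\lambda)\|\nabla W\|^2+\|K_\lambda\|\,\|\nabla W\|\big(G_1(|\tilde\varphi_T|)+G_2(|\tilde\varphi_T|)\,\|\tilde\Ca\|\big).
\end{equation*}
Next I split on $\|\tilde\Ca\|$. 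On the compact set $\{\|\tilde\Ca\|\le c\}$, $W$ is bounded by $\bar W\coloneqq\max_{\|\tilde\Ca\|\le c}W(\tilde\Ca)$. On $\{\|\tilde\Ca\|>c\}$, Fact~F2 gives both $\|\nabla W\|\,\|\tilde\Ca\|\le kW$ and $\|\nabla W\|\le kW/c$, so, dropping the negative term,
\begin{equation*}
\dot W\le \beta(t)\,W,\qquad \beta(t)\coloneqq\|K_\lambda\|\,k\Big(\tfrac1c\,G_1(|\tilde\varphi_T(t)|)+G_2(|\tilde\varphi_T(t)|)\Big).
\end{equation*}
Since each $G_i$ is differentiable and null at the origin it is $O(s)$ near $s=0$, so $\beta$ is continuous on $[0,\infty)$ with $\beta(t)=O(e^{-at})$, whence $\int_0^{\infty}\beta(s)\,\mathrm{d}s<\infty$. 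A standard comparison argument (whenever $W(t)>\bar W$ the trajectory necessarily lies in $\{\|\tilde\Ca\|>c\}$, where $\dot W\le\beta W$ holds) then yields $W(\tilde\Ca(t))\le\max\!\big(W(\tilde\Ca(0)),\bar W\big)\exp\!\big(\int_0^{\infty}\beta\big)$ for all $t\ge0$, i.e.\ $W$ stays bounded along trajectories.

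To convert this into boundedness of $\tilde\Ca$ itself, I would use that $J_z(\cdot,\varphi_T^\star)$ is strictly convex with a minimizer, hence has bounded sublevel sets (a strictly convex function attaining its minimum is unbounded along every ray emanating from the minimizer, which precludes an unbounded sublevel set); consequently $W$ is proper and boundedness of $W(\tilde\Ca(t))$ forces boundedness of $\tilde\Ca(t)$. Thus every trajectory of the cascade is bounded, and \cite{Sontag1989} yields global asymptotic stability of the unique equilibrium $(\Ca^\star,\varphi_T^\star,Q_r,\xi^\star)$. At this equilibrium $Q=Q_r$ and $\Ca=\argmin_\Ca J_z(\Ca,\varphi_T^\star)$, which by the reformulation of Problem~1 carried out in Section~\ref{sec:reformulation} means that $x^\star=\Phi z^\star$ is exactly the constrained minimizer of $J$; hence Problem~1 is solved. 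I expect the main obstacle to be precisely this boundedness step: the perturbation $h$ may grow linearly in $\tilde\Ca$ (through $G_2$), which a priori competes with the quadratic-in-$\nabla W$ dissipation and could allow blow-up; Facts~F1 and F2 are tailored to tame this, reducing the estimate to a linear-in-$W$ inequality with an integrable coefficient, integrability being inherited from the exponential decay of $\tilde\varphi_T$ produced by the linear driving subsystem. A secondary, more technical point is the properness of $W$, which here comes from strict convexity of $J_z(\cdot,\varphi_T^\star)$ rather than from any explicit growth bound.
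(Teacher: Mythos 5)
Your proof is correct and follows essentially the same route as the paper: cascade decomposition, Sontag's theorem reducing GAS to boundedness of trajectories, the same rewriting of $\Sigma_\Ca^d$ in relative coordinates with perturbation $h$, and radial unboundedness of $J_z(\cdot,\varphi_T^\star)$ to pass from boundedness of $W$ to boundedness of $\Ca$. The only difference is that where the paper invokes \cite[Lemma~1]{Jankovic1996} as a black box to conclude that $W(\Ca(t))$ stays bounded, you reconstruct that lemma's argument explicitly (the $\dot W\le\beta(t)W$ estimate with $\beta$ integrable thanks to the exponential decay of $\tilde\varphi_T$ from the linear driving subsystem), which is a faithful and correct unpacking of the cited result rather than a different method.
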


\begin{proof}
From Proposition 3, we know that for any $R>0$, there exists $\xi^\star(R)$ such that $(\varphi_T^\star(R),Q_r,\xi^\star(R))$ is an GAS equilibrium for $\Sigma_Q^d$. From Proposition 4, we have that $\Ca^\star (\varphi_T)=\textstyle\argmin_\Ca J_z(\Ca,\varphi_T)$ is a GAS equilibrium of $\Sigma_\Ca^d$ for any constant $\varphi_T>0$. From \cite{Sontag1989}, in such a case, boundedness of trajectories implies global asymptotic stability of $(\Ca^\star (\varphi_T^\star(R)),\varphi_T^\star(R),Q_r,\xi^\star(R))$ for the whole closed loop  system for all $R>0$. To prove boundedness property, first note that existence of a GAS equilibrium for $\Sigma_Q^d$ implies boundedness of the $(\varphi_T,Q,\xi)$ substate. Then, observe that dynamics of $\Sigma_\Ca^d$ can be reformulated as follows using relative coordinates $\tilde \varphi_T=\varphi_T-\varphi_T^\star$ and $\tilde \Ca=\Ca-\Ca^\star$:
\begin{equation*}
\dot{\tilde\Ca} = -K_\lambda\nabla_\Ca  J_z\left(\tilde\Ca+\Ca^\star,\varphi_T^\star\right) - K_\lambda h(\tilde\Ca,\tilde\varphi_T).
\end{equation*}
In such case, and whenever Assumption~1, F1) and F2) hold, all the hypothesis for the applicability of \cite[Lemma~1]{Jankovic1996} are satisfied 
which proves that $W(\Ca(t))$ remains bounded which, in turn, proves boundedness of $\Ca(t)$ since $\Ca \mapsto J_z(\Ca,\varphi_T)$ is radially unbounded for all $\varphi_T>0$, due to its convexity and the existence of a minimum.
\end{proof}

\begin{rem}[About F1) and F2)]
Hypothesis F1) imposes linear growth with respect to $\Ca$ of $h$, the coupling term from $\Sigma_Q^d$ to $\Sigma_{\mathcal C}^d$. This prevents finite time escape of $\Ca$ due to this interaction between subsystems \cite{Jankovic1996}.
Regarding F2), \cite[Lemma~2]{Jankovic1996} implies that this fact is satisfied if $J_z(\cdot,\varphi_T^\star)$ is polynomial for any $\varphi_T^\star>0$, in addition of fulfilling requirements of Assumption~1.
\end{rem}

{\color{blue}

\section{Discussions about Robustness}

The considered converter aims regulating output voltage in spite of \emph{unknown} load variation. Enjoyed by controller of Theorem~2, robustness with respect to load magnitude is indeed an essential feature for any control scheme dealing with realistic scenarii. It has been theoretically proved that this controller achieves exact voltage regulation for every $R>0$ and is optimal at the steady-state in terms of flux repartition.

Let us now provide insights about how other model parameter uncertainties might affect closed-loop stability and performance.

\subsection{Uncertainties on capacitor}

In most practical case, proposed controller is structurally robust to arbitrary large uncertainty with respect to capacitor magnitude $C$. To see it, first observe that this controller is built up from $\Phi^{-1}$, $C_\mu$, $C_\lambda$ and $U$. It is clear that $\Phi^{-1}$ and $U$ are independent from $C$. Furthermore, even if $C_\mu$ depends on $Q/C$ and $Q_r/C$, voltage $v=Q/C$ is a measurable variable. This removes dependency from the capacitor $C$ since $v=Q/C$ and $v_r=Q_r/C$ hold. With the assumption that $\nabla_\Ca J_z(\Ca,\varphi_T)$ is independent of $C$, it should be clear that the knowledge of $C$ is not required to implement this controller.

\subsection{Uncertainties on inductors}

When implementing proposed controller, defect of electrical components might be troublesome. In order to anticipate this issue, dynamics of non ideal inductors is captured by including Equivalent Series Resistance (ESR) vector $r\in\real^m$ to the model, so that $-r_k \varphi_k /L_k$ is added on the right-hand side of \eqref{eq:flux}. If resistance magnitudes are known, it suffices to perform a ``pre-feedback'' of the form $d_k=r_k\varphi_k /(L_k E_k)+\tilde d_k$ to recover original version of \eqref{eq:flux} with $\tilde d_k$ in place of $d_k$. Then, controller mapping $x$ to $\tilde d_k$ can be designed as in previous section.

In almost all practical cases, entries of vector $r$ as well as inductor magnitudes suffer from uncertainties, though. The actual controller has then to possess robustness properties against the (inevitably) inaccurate pre-feedback. Further, the use of nominal values (instead of actual values) to perform change of coordinates might affect independence between $\Sigma_{\mc C}$ and $\Sigma_Q$.

However, it is expected that deviation of $L$ and $r$ with respect to their nominal values is small. This is expected not to comprise stability in most practical situations: Next section provides experimental evidence that the proposed controller can behave properly even without pre-feedback.\footnote{In practice, entries of state-matrix of closed-loop system is continuous with respect to $L$ and $r$ at their nominal values. As a result, this proves that eigenvalues of this matrix must vary continuously with the entries of $L$ and $r$. This, in turn, demonstrates that if asymptotic stability is ensured for the nominal case, then their exists a neighborhood, in the parametric space associated with $L$ and $r$, in which asymptotic stability is preserved.
} 
Furthermore, observe that if stability is preserved, then integral action in \eqref{eq:controllerSigma} will compensate for uncertainties so that voltage will eventually converge to its reference. Note that flux distribution $\mc C$ might converge to a slightly different value than the optimal one, though. This means that if the last secondary objective could be compromised, primary objective of voltage regulation is expected to be fulfilled in the robust case.

\begin{rem}[Casimir functions (continued)]
	 It can be proved that including ESR of inductors into the model prevents existence of Casimir functions. This suggests that implementation of the pre-feedback can be interpreted as a way to make Casimir functions re-appear. We claim that matrix associated with this pre-feedback is closely related to so-called ``friend'' of some invariant subspace of the state-space for which $Q$ is identically zero (see \cite{Wonham2012}). Deeper investigation on this topic is a current line of research.
\end{rem}

}

\section{Experimentations}

In this section, the implementation of the proposed approach is illustrated by experimental results.
\begin{figure}
	\centering
	\includegraphics[width=\columnwidth]{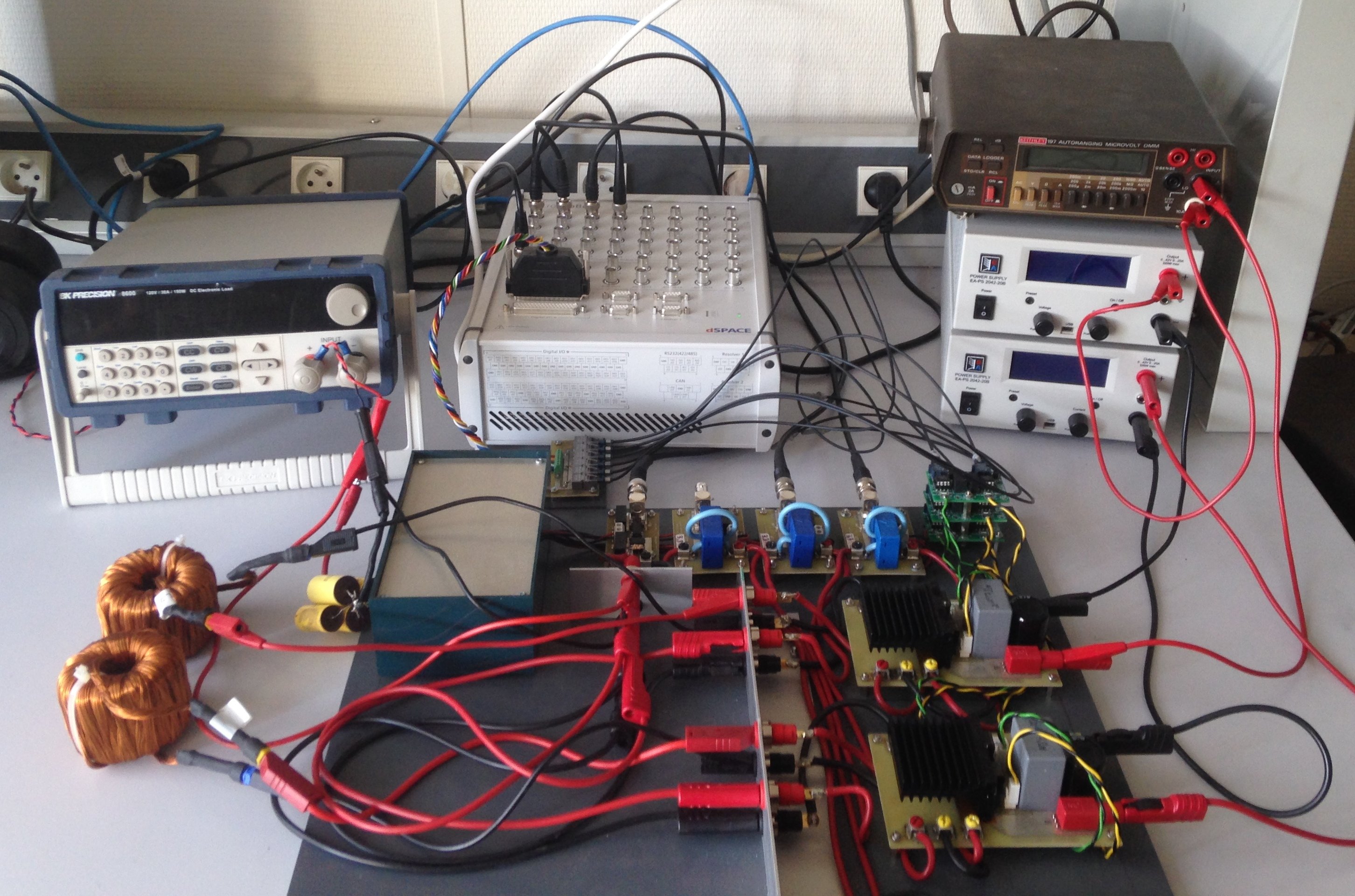}
	\caption{Experimental test-bench}
	\label{fig:banc}
	\vspace{-0.5cm}
\end{figure}

The experimental setup, depicted by Fig.\ref{fig:banc}, is composed of $2$ heterogeneous buck converters ($m=2$) in the sense that inductors, as well as transistors are different. The second converter is designed in such a way that its passive elements have lower quality but the switches have a better efficiency. This means that for low power, the use of converter 2 is preferable whereas converter 1 should have priority for high power. See \cite{Delpoux2018} for detailed discussion on this feature. 

The controller hardware is a dSpace MicroLabBox. For any $R$, control objectives are (i) regulate charge $Q$ at the reference $Q_r=264$ mC which corresponds to a voltage reference $v_r=12$ V and (ii) impose optimal flux distribution through the converters with respect to a cost function $J$. The load variations are performed by a DC electronic load BK Precision 8600 series with a maximum power of 150W and controlled by the dSpace board.

Bench parameters are the followings. The two input voltages are such that $E_1=E_2=24$ V. The switching frequency is chosen as $f_s=20$kHz whereas the sampling frequency is $f_e=10kHz$. All the transistors are MOSFET. For the first converter, their references are STP31510F7 and for the second, STP30NF10. Inductor of the first converter is $K_1=2.83$mH and for the second is $L_2=1.3$mH. Finally the output capacitor value is $C=22mF$.

%
%
%
%

For both experiments, we apply the control law \eqref{eq:controllerSigma}, \eqref{eq:lambda2} and the changes of coordinates $\Phi^{-1}$ and $U$ with the following parameters value:
\[
	k_d=1,~ k_i=10  \text{ and } K_\lambda=0.1,
\]
which comply with statements of Theorem 2.

\subsection{Experiment 1: Decomposition Highlighting}

\subsubsection{Cost function}
For this experiment, the cost function is purely academic and defined as
\[J(\varphi)=\frac{1}{2}(\varphi_1-\varphi_2-\Ca^\star)^2,\]
where $\Ca^\star\in\real$ is a parameter. $J$ has been chosen for its straightforward expression in the $z$-coordinates since
\[J_z(\Ca)=\frac{1}{2}(\Ca-\Ca^\star)^2.\]
Therefore, asymptotic value of $\Ca$ is $\Ca^\star$ for which minimum of $J_z$ is reached. Note that $J_z$ trivially satisfied Assumption~1.

\subsubsection{Experiment environment}

The experiment is divided into three phases:
\begin{itemize}
\item[$\bullet$] Phase \ding{192}: $t\in[0;1[$ s. Initially at 0 C, at $t=0~$s the charge reference $Q_r$ is set to 264 mC for a load value of $R=20~\Omega$. Parameter $\Ca^\star$ is set at 0 Wb.
\item[$\bullet$] Phase \ding{193}: $t\in[1;2[$ s. At $t=1$ s, load magnitude switches to $R=5~\Omega$ and $\Ca^\star$ is still at 0$~$Wb.
\item[$\bullet$] Phase \ding{194}: $t\in[2;3]$ s. At $t=2$ s, $\Ca^\star$ is set at $5\times10^{-3}$ Wb.
\end{itemize}

Because of decoupling of $\Sigma$ in $\Sigma_Q$ and $\Sigma_\Ca$, value of $\Ca$ is expected to remain unchanged when Phase \ding{193} occurs whereas value of $Q$ and $\varphi_T$ are expected to remain unchanged when Phase \ding{194} occurs.

\subsubsection{Results}

Results of Experiment 1 are given by Fig.~\ref{fig:expe1}. Subplot 1 depicts charge $Q$ with the reference $Q_r$, Subplot 2 depicts fluxes through $L_1$ and $L_2$ while Subplot 3 displays $\Ca$ with the reference $\Ca^\star$.

\begin{figure}
	\centering
	\includegraphics[width=\columnwidth]{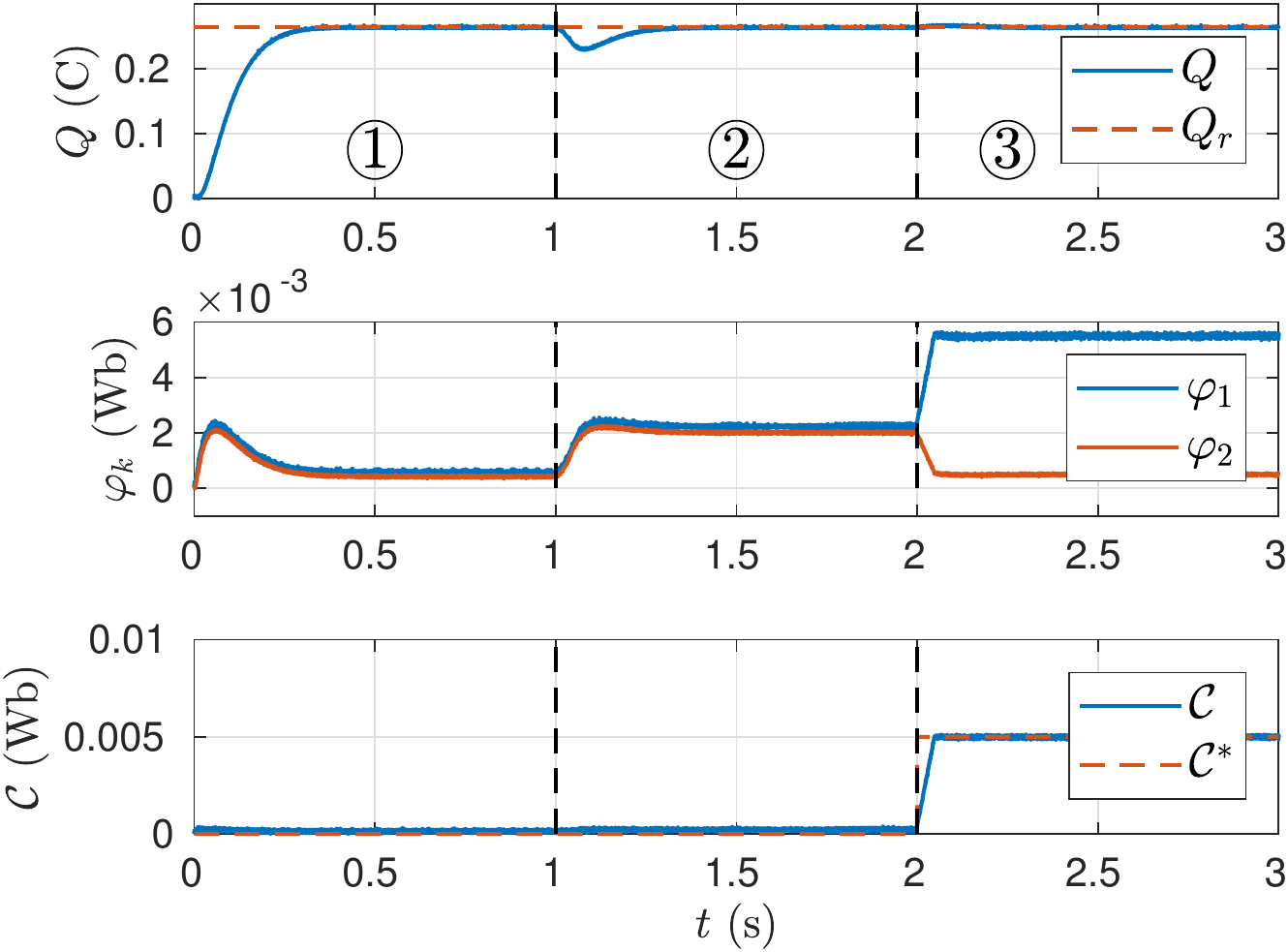}
	\caption{Time results for Exp. 1.}
	\vspace{-0.5cm}
	\label{fig:expe1}
\end{figure}
We see on Fig.~\ref{fig:expe1} that when the load changes value ($t=1$~s), the flow distribution $\Ca$ is almost not impacted. In the same way, when the flow distribution is varying ($t=2$ s), the load charge $Q$ is almost not impacted. An extremely small overshoot can be observed on $Q$. This might come from uncertainties on $L$ and ESR of inductors. Magnitude of this overshoot proves that $\Sigma_Q$ and $\Sigma_\Ca$ are almost fully (robustly) disconnected as shown by Fig.~\ref{fig:newopenloop}. This validates robustness of the approach.
\subsection{Experiment 2: minimization of power losses}

Experiment 2 aims providing a meaningful practical application of paper result. Minimization of power losses is considered for defining the cost function and an unknown load variation is taken into account.

\subsubsection{Experiment environment}\label{experimentEnv}

The experiment is divided into two phases:
\begin{itemize}
\item[$\bullet$] Phase \ding{192}: $t\in[0;1[$ s. Initially at 0 C, at $t=0$ s the charge reference $Q_r$ is set to 264 mC. During this phase, the load value is $R=20~\Omega$.
\item[$\bullet$] Phase \ding{193}: $t\in[1;2]$ s. At $t=1$ s, magnitude of the load is instantly changed to $R=5~\Omega$.
\end{itemize}

\subsubsection{Cost function}

In \cite{Tregouet2016}, it is stated that power losses in $k$-th converter can be expressed as the following quadratic function in terms of converter current $i_k$:
\begin{equation*}\label{eq:powerlosses}
	p_k(i_k)=r_{1,k}i_k^2+r_{2,k}i_k,
\end{equation*}
where $r_{1,k}$ and $r_{2,k}$ are constants depending on electrical components and come from a finer model than \eqref{eq:model} (see \cite{Tregouet2016} for numerical values).
We consider minimization of overall losses, i.e. the sum of $p_k$, as a secondary objective, so that the cost function reads
\[
	J(\varphi)=\sum_{k=1}^m p_k\left(\frac{\varphi_k}{L_k}\right)=\varphi^\intercal \diag{k_1}\varphi+k_2^\intercal \varphi,
\]
where 
\[
	\diag{k_1}=\diag{L}^{-1}\diag{r_1}\diag{L}^{-1},
\]
and	$k_2^\intercal= r_2^\intercal\diag{L}^{-1}$.

Controller gains are selected as follows:
\[
	k_1=\begin{bmatrix}
		0.1623\\
		1.8343
	\end{bmatrix}\times 10^5 \text{ and } k_2=\begin{bmatrix}
		130.7\\
		27.7
	\end{bmatrix}.
\]

Fig.~\ref{fig:powerlosses} depicts cost function $J$ levels (elliptical sections) as well as admissible equilibriums when $Q=Q_r$ for $R=20~\Omega$ and  $R=5~\Omega$ (black dashed lines). 
\begin{figure}
	\centering
	\includegraphics[width=\columnwidth]{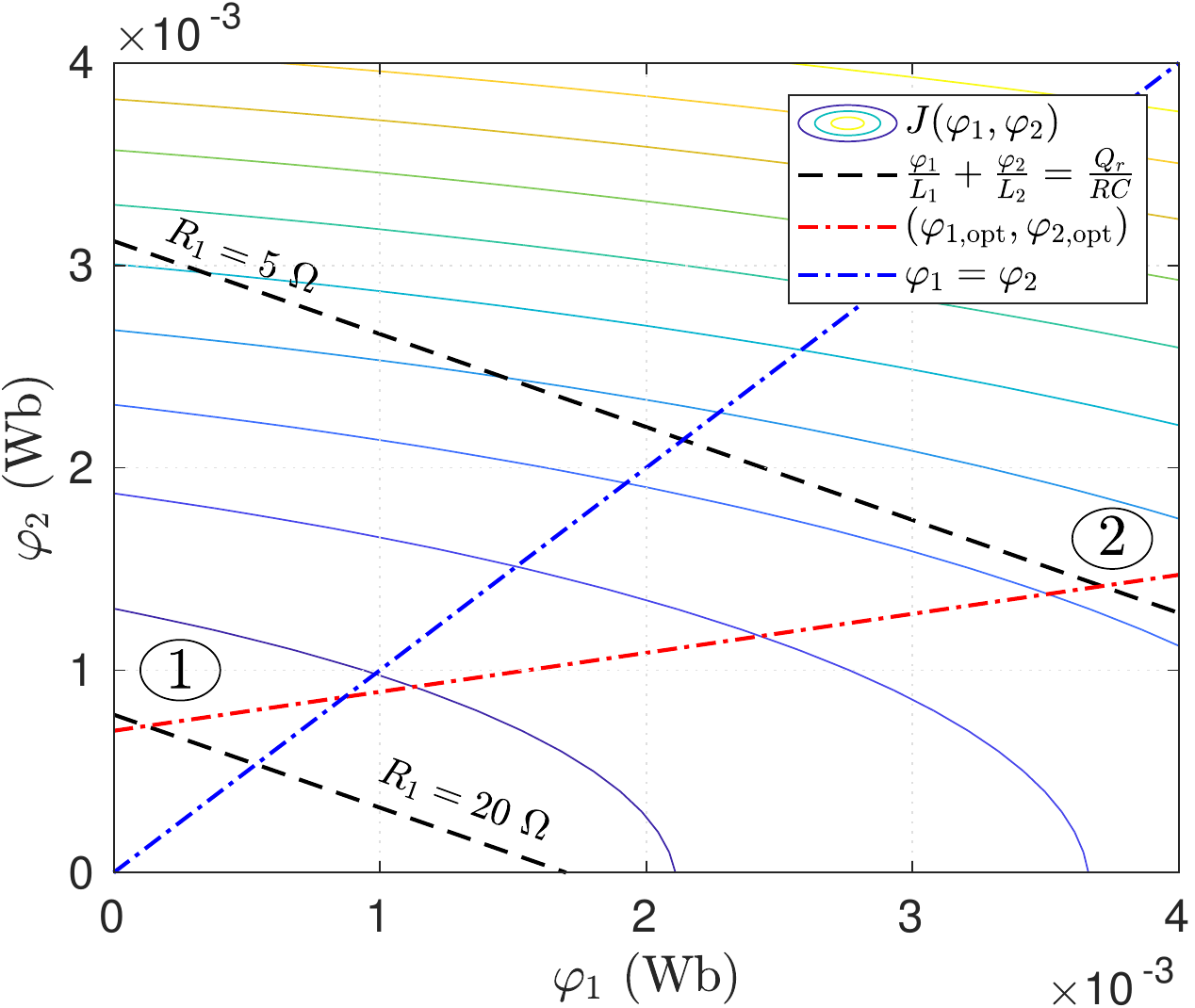}
	\caption{Cost function levels and optimal repartition}
	\vspace{-0.5cm}
	\label{fig:powerlosses}
\end{figure}
The blue dashed line $\varphi_1=\varphi_2$ is the frontier above which more flux goes through the second coil than through the first one. System trajectory evolves below whenever the opposite relationship holds. The optimal locus as a function of the load are located by the red dashed line. On this line power losses are minimal since flux repartition is optimal. Intersection between red dashed line and the black one give the equilibrium point that the closed-loop is expected to reach. Indeed, at the intersection, we ensure that $Q=Q_r$ and $J$ is minimal.

\subsubsection{Experimental results}

Fig.~\ref{fig:resultRinconnue} depicts results of Experiment 2.
\begin{figure}
	\centering
	\includegraphics[width=\columnwidth+0.5cm]{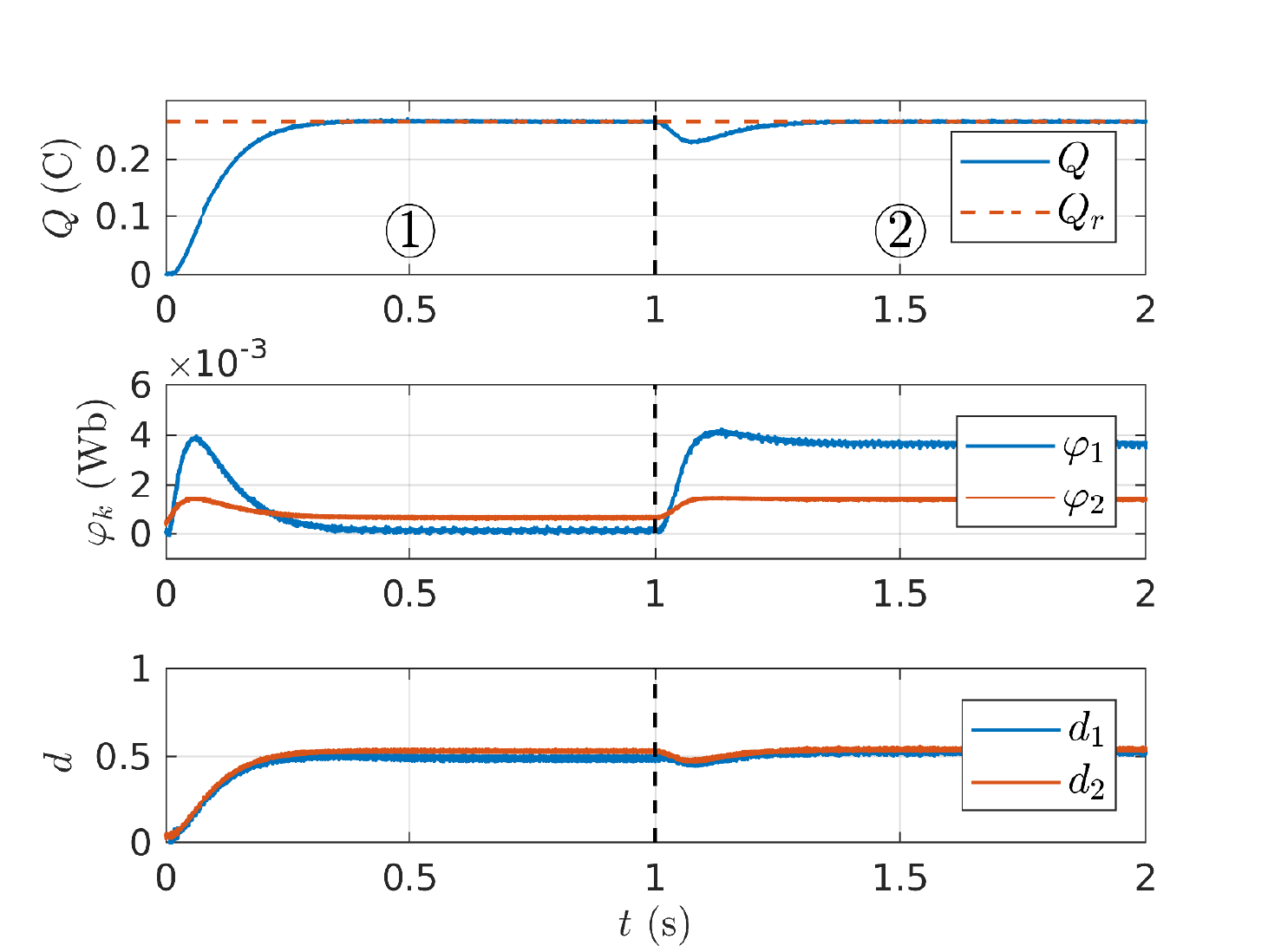}
	\caption{Time results for Exp. 2: energy variables.}
	\vspace{-0.5cm}
	\label{fig:resultRinconnue}
\end{figure}
Subplot 1 depicts charge $Q$ with the reference $Q_r$, Subplot 2 depicts the flux through both inductors while Subplot 3 displays duty cycle of each converter.

Benefit of this last control law is that when the load magnitude changes, the closed-loop converge to the equilibrium point satisfying $Q=Q_r$ and minimizing the cost function $J$. Indeed we can see that for the first phase ($R=20$ $\Omega$), there is more flux passing through the second coil than through the the first one: it corresponds to the \ding{192} of Fig.~\ref{fig:powerlosses}. Yet, for phase 2 ($R=5$ $\Omega$), minimization of power losses gives the priority to the first coil: point \ding{193} of Fig.~\ref{fig:powerlosses} is therefore recovered.

\section{Conclusion}
In this paper, charge dynamics has been separated from flux distribution in the Hamiltonian framework. This separation is related to Casimir functions and translates control objectives. The foremost objective of charge regulation is disconnected from the corresponding secondary objective, corresponding to current repartition and related to Casimir function. Once the separation is done by a change of coordinates, control design can be decomposed in two parts. On the one hand, for the charge regulation, we designed a load independent controller performing an integral action on a non passive output. On the other hand, the control of Casimir functions is designed to minimize the cost function by considering it as the virtual energy of the closed-loop. Experiments have been done in order to highlight the relevance of the proposed control scheme for solving meaningful practical problem of minimization overall losses in the electrical circuit.

Further research will mainly focus an three points. Firstly, we want to integrate other converters (for instance boost converters) in parallel interconnection of converters. This induces non-linearities in the model. Secondly, we want to consider input constraints as duty-cycles are constrained to live in compact set $[0,1]^m$. How to preserve the decoupling between dynamics of flux repartition and output voltage is an open question. Thirdly, robustness with respect to large serial resistance could be interesting. An adaptive way to recover the value of those resistances could be relevant. Finally, dealing with non constant load impedance, like Constant Power Load (CPL), is an practically relevant direction to extend this work.

\bibliographystyle{unsrt} 
\bibliography{bibliothese}

\end{document}